\documentclass[lettersize,journal]{IEEEtran}
\usepackage{amsmath,amsfonts}
\usepackage{algorithmic}
\usepackage{algorithm}
\usepackage{array}
\usepackage[caption=false,font=normalsize,labelfont=sf,textfont=sf]{subfig}
\usepackage{textcomp}
\usepackage{stfloats}
\usepackage{url}
\usepackage{verbatim}
\usepackage{graphicx}
\usepackage{cite}
\usepackage{amsthm}
\usepackage{cite}
\usepackage{amsmath,amssymb,amsfonts}
\usepackage{graphicx}
\usepackage{algorithm,algorithmic}
\usepackage{hyperref}
\hypersetup{hidelinks=true}
\usepackage{textcomp}

\usepackage{color}
\usepackage{times}
\usepackage{epsfig}
\usepackage{mathrsfs}
\usepackage{epstopdf}
\usepackage{subfig}
\usepackage{float}

\newtheorem{thm}{Theorem}

\newtheorem{rem}{Remark}

\newtheorem{lem}{Lemma}
\newtheorem{ass}{Assumption}

\newcommand{\bea}{\begin{eqnarray}}
\newcommand{\eea}{\end{eqnarray}}

\newcommand{\beaN}{\begin{eqnarray*}}
\newcommand{\eeaN}{\end{eqnarray*}}

\def\diff{\mathrm{d}}
\def\Z{\mathbb{Z}}
\def\R{\mathbb{R}}

\begin{document}

\title{Segment-Based Two-Loop Adaptive Iterative Learning Control for Spacecraft Position and Attitude Tracking}

\author{Fan Zhang, Deyuan Meng, \IEEEmembership{Senior Member, IEEE}, and Ying Tan, \IEEEmembership{Fellow, IEEE}
\thanks{Fan Zhang and Deyuan Meng are with the Seventh Research Division, Beihang University (BUAA), Beijing 100191, China, and also with the School of Automation Science and Electrical Engineering, Beihang University (BUAA), Beijing 100191, China (email: zhangfan\_nwpu@nwpu.edu.cn, dymeng@buaa.edu.cn).}
\thanks{Ying Tan is with the Department of Mechanical Engineering,
University of Melbourne, Parkville, VIC 3010, Australia (e-mail:
yingt@unimelb.edu.au).}}

\maketitle

\begin{abstract}
Proximity operations of rigid bodies, such as spacecraft rendezvous and docking, require precise tracking of both position and attitude over finite time intervals. These operations are often repeated under uncertain conditions, with unknown but repeatable parameters and disturbances. Adaptive iterative learning control (ILC) is well suited to such tasks, as it can track desired trajectories while learning unknown, iteration-invariant signals or parameters. However, conventional adaptive ILC faces two challenges: (i) the coupling between rotational and translational dynamics complicates the design of the two coordinated learning loops for position and attitude, and (ii) standard adaptive ILC designs cannot guarantee bounded control inputs. To address these issues, we propose a dual-number-based, segment-based two-loop adaptive ILC framework for simultaneous high-precision position and attitude tracking. The framework employs two learning loops that interact through a dual-number representation of tracking errors, combining position and attitude errors into a single mathematical object for unified control design. A segment-based dynamic projection mechanism ensures that both parameter estimates and control inputs remain bounded without prior knowledge of uncertainties. Mathematical analysis and numerical simulations demonstrate that the proposed framework significantly enhances tracking performance under unknown but repeatable uncertainties and strong rotational-translational coupling. 
\end{abstract}

\begin{IEEEkeywords}
Adaptive iterative learning control (ILC), two-loop ILC, dual-number representation, segment-based projection spacecraft proximity operations.
\end{IEEEkeywords}

\section{Introduction}\label{sec_Introduction}

Proximity operations involve controlling one rigid body relative to another at distances where small errors in position or attitude can compromise safety or mission success. In aerospace, this includes spacecraft rendezvous and docking, aircraft aerial refueling, and formation flying \cite{cszflsz:25,dqrxc:18,zzc:21}. These tasks require precise control under uncertain and varying dynamic conditions. High-precision tracking is particularly critical in missions such as NASA's CubeSat Proximity Operations Demonstration (CPOD), which used CubeSats---small, modular satellites roughly $10 \times 10 \times 10$ cm per unit---for autonomous rendezvous and docking demonstrations \cite{rwm:18}.

Maintaining this precision is challenging, yet many operations are repeated under similar conditions: although the environment may vary, the underlying task dynamics remain largely consistent. This repetition forms the basis for iterative learning control (ILC), which leverages past executions to progressively improve tracking and reduce errors. Standard ILC systematically incorporates previous iteration errors into the control input, effectively creating a feedforward learning mechanism that improves transient performance while reducing the dependence on complete system knowledge \cite{bta:06}. When partial model knowledge is available but some parameters or iteration-invariant disturbances are unknown, adaptive ILC estimates these unknowns online, allowing faster convergence, improved tracking, and better disturbance rejection \cite{t:04,lsx:19,xt:02}.

Despite its promise, conventional adaptive ILC faces two major challenges for complex, multi-degree-of-freedom tasks such as rigid body proximity operations: (i) strong dynamic coupling between position and attitude complicates the design of the two coordinated learning loops, as actions in one loop affect the other and impede simultaneous convergence, and (ii) standard adaptive ILC cannot inherently guarantee bounded control inputs and parameter estimates, leaving the amplitude of learning updates potentially unbounded \cite{Dixon:03}.

To address these challenges, this paper develops a segment-wise two-loop adaptive ILC framework with dual numbers. The objectives are to (i) achieve simultaneous, high-precision position and attitude tracking and (ii) ensure uniform boundedness of both control inputs and parameter estimates without prior knowledge of uncertainties. Our contributions are as follows.

\begin{enumerate}
    \item Coordinated two-loop adaptive ILC: Position and attitude loops are coordinated via a dual-number representation of tracking errors, encoding position in the real part and attitude in the dual part. This design respects rigid body geometry and maintains consistency under rotational-translational coupling.
    \item Segment-based dynamic projection: The time horizon is divided into finite segments, dynamically constraining parameter estimates and control inputs to ensure $\mathscr{L}_{\infty\mathrm{e}}$-norm boundedness (i.e., $\sup_{0\leqslant t\leqslant T} \|\boldsymbol{x}(t)\|<\infty$ for some $\boldsymbol{x}(t)\in\R^n$) without prior uncertainty bounds.
    \item Enhanced tracking with guaranteed boundedness: Mathematical analysis and numerical validation demonstrate high-precision tracking and boundedness under unknown parameters, repetitive disturbances, and strong coupling, effectively addressing the limitations of conventional adaptive ILC.
\end{enumerate}

The remainder of the paper is structured as follows. Section \ref{sec_Preliminaries} introduces quaternions and dual quaternions for coupled position-attitude representation. Section \ref{sec_dynamics} presents the kinematics and dynamics, based on which we formulate the tracking problem. Section \ref{sec_controller} presents the two-loop adaptive ILC design, dual-number representation, segment-based projection, and boundedness analysis. The simulation results are in Section \ref{sec_Sim}, and the conclusions in Section \ref{sec_Conclusion}.

{\it Notations:} Let $\mathbb{R}$ denote the set of real numbers, $\mathbb{Z}$ the set of integers, $\mathbb{Z}_{+}=\{0,1,2,\ldots\}$ the set of nonnegative integers, $\mathbb{Z}_{++}=\{1,2,\ldots\}$ the set of positive integers, and $\mathbb{Z}_{m}=\{1,2,\ldots,m\}$ the set of integers from $1$ to $m$ for any $m \geqslant 1$. For any matrix $\boldsymbol{A} = [a_{ij}] \in \mathbb{R}^{m \times n}$, the spectral norm of $\boldsymbol{A}$ is denoted as $\|\boldsymbol{A}\|_2$ and is equal to $\sigma_1(\boldsymbol{A})$, which represents the largest singular value of $\boldsymbol{A}$. The $n$-sphere is defined as $\mathbb{S}^{n}=\left\{\boldsymbol{x} \in \mathbb{R}^{n+1} \mid \|\boldsymbol{x}\|_2=1 \right\}$. The symbol $\boldsymbol{I}_{3} \in \mathbb{R}^{3 \times 3}$ denotes the $3 \times 3$ identity matrix, $\boldsymbol{1}$ denotes the column vector $[1,0,0,0]^\top$, and $\boldsymbol{0}$ represents the zero vector in $\mathbb{R}^{4}$. For any vector $\boldsymbol{x} = \left[x_1,x_2,x_3\right]^\top \in \mathbb{R}^3$, the cross-product operator and the dimension-augmentation operator are defined as, respectively,
\begin{align}
\label{eq1_f}
\boldsymbol{x}^{\times}=\left[\begin{array}{ccc}
0 & -x_3 & x_2 \\
x_3 & 0 & -x_1 \\
-x_2 & x_1 & 0
\end{array}\right]
\text{ and }
\operatorname{aug}\left(\boldsymbol{x}\right)\triangleq\left[\begin{array}{c}
0  \\
\boldsymbol{x}
\end{array}\right].
\end{align}
\noindent For any four-dimensional vector $\boldsymbol{y} = \left[y_1,y_2,y_3,y_4\right]^\top \in \mathbb{R}^4$, its corresponding dimension-reduction operator to the specified three-dimensional space is defined as
\begin{align}
\label{eq2_f}
\operatorname{red}\left(\boldsymbol{y}\right)\triangleq[y_2,\,y_3,\,y_4]^\top.
\end{align}
For any $z \in \mathbb{R}$, the sign function $\operatorname{sgn}(z)$ is defined as
\begin{equation}
\label{eq3_f}
\operatorname{sgn}(z) =
\begin{cases}
-1,\,\text{if } z < 0, \\
0,\,\text{if } z = 0, \\
1,\,\text{if } z > 0 .
\end{cases}
\end{equation}
For any vector $\boldsymbol{z} = [z_i] \in \mathbb{R}^n$, we extend the definition component-wise as
\begin{equation}
\label{eq4_f}
\operatorname{sgn}(\boldsymbol{z}) \triangleq [ \operatorname{sgn}(z_i)] \in \mathbb{R}^n.
\end{equation}

\section{Quaternions and Dual Quaternions}\label{sec_Preliminaries}

\par This section sets the foundations by introducing the fundamental concepts and mathematical preliminaries necessary for understanding the rotation and translation of rigid bodies. We began by defining two general orthogonal coordinate frames denoted as $\mathcal{F}_1$ and $\mathcal{F}_2$. Based on these definitions, we utilize \textit{quaternions} and \textit{dual quaternions} for representing the position and attitude of $\mathcal{F}_2$ with respect to $\mathcal{F}_1$ of rigid bodies. Additionally, this section presents several lemmas pertinent to dual quaternions, which furnish the mathematical framework necessary for the analysis and control of coupled motion in fields such as aerospace engineering and robotics.

Throughout this paper, we emphasize the importance of clearly specifying both the origins and the frames in which vectors are expressed, as this distinction is crucial for avoiding ambiguity in rigid-body motion analysis.

\subsection{Quaternions and Unit Quaternions}

In this subsection, we formally define quaternions and unit quaternions, which provide the algebraic foundation, including the conjugation and multiplication, for representing three-dimensional rotations. Establishing these definitions is essential for constructing the motion kinematics and dynamics.

\par A {quaternion}, denoted by
\begin{align}
\label{eq5_f}
\boldsymbol{Q}\triangleq[\varepsilon,\boldsymbol{q}^{\top}]^{\top} \in \mathbb{R}^{4},
\end{align}
is composed of a scalar $\varepsilon\in\R$ and a vector $\boldsymbol{q}\in\R^3$. The corresponding conjugation of $\boldsymbol{Q}$ is defined as $\boldsymbol{Q}^{*}\triangleq[\varepsilon, -\boldsymbol{q}^{\top}]^{\top} \in \mathbb{R}^{4}$. The multiplication of two quaternions $\boldsymbol{Q}_i,\,i=1,2$ is given by
\begin{align}
\label{eq6_f}
\boldsymbol{Q}_1 \circ \boldsymbol{Q}_2\triangleq\left[\begin{array}{c}
\varepsilon_1 \varepsilon_2-\boldsymbol{q}_1^{\top} \boldsymbol{q}_2 \\
\varepsilon_1 \boldsymbol{q}_2+\varepsilon_2 \boldsymbol{q}_1+\boldsymbol{q}_1^{\times}\boldsymbol{q}_2
\end{array}\right],
\end{align}
\noindent which is {associative
and distributive but not commutative}. In addition, it is verified that
\begin{align}
\label{eq7_f}
\left(\boldsymbol{Q}_1 \circ \boldsymbol{Q}_2\right)^*=\boldsymbol{Q}_2^* \circ \boldsymbol{Q}_1^*.
\end{align}

\par According to Euler's theorem, formulated by Leonhard Euler in 1775, in three-dimensional space, any displacement of a rigid body such that a point on the rigid body remains fixed, is equivalent to a single rotation about some axis that runs through the fixed point. This means that the relative attitude of $\mathcal{F}_2$ with respect to $\mathcal{F}_1$ can be represented by a single rotation angle $\varphi \in \mathbb{R}$ (commonly measured in radians) about a fixed unit axis $\boldsymbol{e} \in \mathbb{R}^3$ \cite{Ruiter}. 
The positive direction of this rotation adheres to the right-hand rule.

The rotation angle and the unit axis can be converted to quaternion components as
\begin{align}
\label{eq7_f_add1}
\varepsilon \triangleq \cos\left(\frac{\varphi}{2} \right)~\text{and}~ 
\boldsymbol{q} \triangleq \boldsymbol{e} \sin \left(\frac{\varphi}{2}\right),
\end{align}
which allow a succinct formulation of the attitude kinematics and dynamics. 
The unit-norm constraint
\[
\varepsilon^2 + \boldsymbol{q}^\top \boldsymbol{q} = 1
\]
ensures that
\begin{align}
\label{eq8_f_add1}
\boldsymbol{Q} \circ \boldsymbol{Q}^* = \boldsymbol{1},
\end{align}
where $\boldsymbol{1}$ is defined in Notations of Section \ref{sec_Introduction}. The quaternion fulfilling  \eqref{eq8_f_add1} is referred to as the \emph{unit quaternion} \cite{h:04}, denoted by $\boldsymbol{Q} \in \mathbb{S}^3$. Unit quaternions are particularly useful for representing the orientation of rigid bodies, as both the rotation angle $\varphi$ and the unit axis $\boldsymbol{e}$ can be directly derived from $\boldsymbol{Q}$.

\subsection{Dual Quaternions and Unit Dual Quaternions}
\label{s2_ssb}

In this subsection, we introduce the concepts of dual quaternions and unit dual quaternions, with a particular emphasis on the fundamental role played by dual numbers. 

\par  {Dual quaternions} are generally understood as a combination of dual numbers and quaternions. The dual number is known as the complex number of parabolic type and extends the real numbers by adding a special imaginary unit $\epsilon$, which satisfies
\begin{align}
\label{eq8_f}
\epsilon^2=0 ~ \text{and} ~ \epsilon\neq 0.
\end{align}
\noindent The dual numbers form a commutative algebra of dimension two over the reals, which can be found in Section 2.4 of \cite{g:13}. The concept of dual vectors and dual matrices is based on the use of dual numbers. In this context, the set of dual vectors is denoted by $\mathbb{DR}^n=\{
\mathring{\boldsymbol{x}}
\triangleq \boldsymbol{x}_r+\epsilon \boldsymbol{x}_c:\boldsymbol{x}_r,\,\boldsymbol{x}_c\in\mathbb{R}^n\}$, where $\boldsymbol{x}_r,\,\boldsymbol{x}_c\in\R^n$ are called the real and imaginary (dual) parts, respectively. The definition of dual matrices follows a similar format. Given any $\mathring{\boldsymbol{x}}\triangleq \boldsymbol{x}_r+\epsilon \boldsymbol{x}_c,\,\mathring{\boldsymbol{y}}\triangleq \boldsymbol{y}_r+\epsilon \boldsymbol{y}_c\in\mathbb{DR}^n$, four distinct operators are defined as 
\begin{align}
\label{eq9_f}
\operatorname{crs}\left(\mathring{\boldsymbol{x}},\mathring{\boldsymbol{y}}\right)&\triangleq\boldsymbol{x}_r^\top\boldsymbol{y}_c+\boldsymbol{x}_c^\top\boldsymbol{y}_r
&
\operatorname{real}\left(\mathring{\boldsymbol{x}}\right)&\triangleq \boldsymbol{x}_r
\nonumber
\\
\operatorname{exch}\left(\mathring{\boldsymbol{x}}\right)&\triangleq \boldsymbol{x}_c+\epsilon \boldsymbol{x}_r
&
\operatorname{comp}\left(\mathring{\boldsymbol{x}}\right)&\triangleq \boldsymbol{x}_c,
\end{align}
\noindent and $\operatorname{sgn}\left(\cdot\right)$ can be extended to $\operatorname{sgn}\left(\mathring{\boldsymbol{x}}\right)\triangleq
\operatorname{sgn}\left(\boldsymbol{x}_r\right)+\epsilon \operatorname{sgn}\left(\boldsymbol{x}_c\right)$. Similarly, $\operatorname{red}\left(\cdot\right)$ can also be extended to $\operatorname{red}\left(\mathring{\boldsymbol{z}}\right)\triangleq
\operatorname{red}\left(\boldsymbol{z}_r\right)+\epsilon \operatorname{red}\left(\boldsymbol{z}_c\right)$ for all $\mathring{\boldsymbol{z}}\triangleq \boldsymbol{z}_r+\epsilon \boldsymbol{z}_c \in \mathbb{DR}^4$.

A dual quaternion is a special kind of dual vector and can be represented by
\begin{align}
\label{eq10_f}
\mathring{\boldsymbol{Q}} \triangleq \boldsymbol{Q}_r+\epsilon \boldsymbol{Q}_c \in \mathbb{DR}^4,
\end{align}
where $\boldsymbol{Q}_r,\,\boldsymbol{Q}_c\in\R^4$ are two ordinary quaternions defined in \eqref{eq5_f}. Based on the definition of quaternions in the previous subsection, it is possible to rewrite $\mathring{\boldsymbol{Q}}$ in \eqref{eq10_f} as
\begin{align}
\label{eq11_f}
\mathring{\boldsymbol{Q}}=[\mathring\varepsilon,\mathring{\boldsymbol{q}}^{\top}]^{\top},
\end{align}
where $\mathring\varepsilon\triangleq\varepsilon_r+\epsilon \varepsilon_c$ and $\mathring{\boldsymbol{q}}\triangleq\boldsymbol{q}_r+\epsilon \boldsymbol{q}_c$. Observing the similarity of \eqref{eq5_f} and \eqref{eq11_f}, we can naturally define the conjugation of $\mathring{\boldsymbol{Q}}$ as $\mathring{\boldsymbol{Q}}^{*}\triangleq[\mathring\varepsilon, -\mathring{\boldsymbol{q}}^{\top}]^{\top}\in \mathbb{DR}^{4}$, which can be rewritten as
\begin{align}
\label{eq12_f}
\mathring{\boldsymbol{Q}}^{*}= \boldsymbol{Q}^*_r+\epsilon \boldsymbol{Q}^*_c.
\end{align}
\noindent In addition, the multiplication of two dual quaternions $\mathring{\boldsymbol{Q}}_i,\,i=1,2$ follows the same rule presented in \eqref{eq6_f}, that is,
\begin{align}
\label{eq13_f}
\mathring{\boldsymbol{Q}}_1 \circ \mathring{\boldsymbol{Q}}_2\triangleq\left[\begin{array}{c}
\mathring\varepsilon_1 \mathring\varepsilon_2-\mathring{\boldsymbol{q}}_1^{\top} \mathring{\boldsymbol{q}}_2 \\
\mathring\varepsilon_1 \mathring{\boldsymbol{q}}_2+\mathring\varepsilon_2 \mathring{\boldsymbol{q}}_1+\mathring{\boldsymbol{q}}_1^{\times}\mathring{\boldsymbol{q}}_2
\end{array}\right],
\end{align}
\noindent and it is verified that
\begin{align}
\label{eq14_f}
\left(\mathring{\boldsymbol{Q}}_1 \circ \mathring{\boldsymbol{Q}}_2\right)^*=\mathring{\boldsymbol{Q}}_2^* \circ \mathring{\boldsymbol{Q}}_1^*.
\end{align}
\noindent Moreover, a dual quaternion $\mathring{\boldsymbol{Q}}$ is called a {unit dual quaternion} if $\mathring{\boldsymbol{Q}}\circ\mathring{\boldsymbol{Q}}^{*}=\boldsymbol{1}+\epsilon \boldsymbol{0}$ holds, and we denote the set of unit dual quaternions by $$\mathbb{DS}^3=\{\mathring{\boldsymbol{Q}}\in\mathbb{DR}^4:\mathring{\boldsymbol{Q}}\circ\mathring{\boldsymbol{Q}}^{*}=\boldsymbol{1}+\epsilon \boldsymbol{0}\}.$$
\noindent It is not difficult to verify that $\mathring{\boldsymbol{Q}}\circ\mathring{\boldsymbol{Q}}^{*}=\boldsymbol{1}+\epsilon \boldsymbol{0}$ is equivalent to $\mathring{\boldsymbol{Q}}^{*}\circ\mathring{\boldsymbol{Q}}=\boldsymbol{1}+\epsilon \boldsymbol{0}$ based on \eqref{eq13_f}. Additionally, we are able to demonstrate that, for any $\mathring{\boldsymbol{Q}}\in \mathbb{DS}^3$,
\begin{itemize}
\item the imaginary part of $\mathring{\boldsymbol{Q}}$ is orthogonal to its real part;
\item the $l_2$-norm of the real part of $\mathring{\boldsymbol{Q}}$ is equal to 1.
\end{itemize}

Based on both properties, we present the following lemma about unit dual quaternions.

\begin{lem}
\label{lem1}
Let $\mathring{\boldsymbol{Q}}_1$ and $\mathring{\boldsymbol{Q}}_2$ be unit dual quaternions, i.e., $\mathring{\boldsymbol{Q}}_1, \mathring{\boldsymbol{Q}}_2 \in \mathbb{DS}^3$. Then,
\[
\mathring{\boldsymbol{Q}}_1^* \circ \mathring{\boldsymbol{Q}}_2 \in \mathbb{DS}^3.
\] 
Moreover,
\[
\mathring{\boldsymbol{Q}}_1^* \circ \mathring{\boldsymbol{Q}}_2 = \boldsymbol{1} + \epsilon \boldsymbol{0} \quad \text{if and only if} \quad \mathring{\boldsymbol{Q}}_1 = \mathring{\boldsymbol{Q}}_2,
\] 
and
\[
\mathring{\boldsymbol{Q}}_1^* \circ \mathring{\boldsymbol{Q}}_2 = -\boldsymbol{1} + \epsilon \boldsymbol{0} \quad \text{if and only if} \quad \mathring{\boldsymbol{Q}}_1 = -\mathring{\boldsymbol{Q}}_2.
\]
\end{lem}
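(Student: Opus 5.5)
The plan relies on the algebraic structure of dual quaternion multiplication \eqref{eq13_f}: it is associative, and conjugation reverses products as in \eqref{eq14_f}. Conjugation is also an involution, $(\mathring{\boldsymbol{Q}}^*)^*=\mathring{\boldsymbol{Q}}$, and $\boldsymbol{1}+\epsilon\boldsymbol{0}$ is a two-sided identity for $\circ$; both follow directly from \eqref{eq13_f} with $\mathring{\boldsymbol{q}}=\boldsymbol{0}$, $\mathring\varepsilon=1$. With these in hand, every claim reduces to short manipulations, and no component-wise computation is needed.

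For membership, set $\mathring{\boldsymbol{P}}\triangleq \mathring{\boldsymbol{Q}}_1^*\circ\mathring{\boldsymbol{Q}}_2$. By \eqref{eq14_f} and the involution, $\mathring{\boldsymbol{P}}^*=\mathring{\boldsymbol{Q}}_2^*\circ\mathring{\boldsymbol{Q}}_1$. Associativity then gives
\[
\mathring{\boldsymbol{P}}\circ\mathring{\boldsymbol{P}}^*=\mathring{\boldsymbol{Q}}_1^*\circ(\mathring{\boldsymbol{Q}}_2\circ\mathring{\boldsymbol{Q}}_2^*)\circ\mathring{\boldsymbol{Q}}_1=\mathring{\boldsymbol{Q}}_1^*\circ\mathring{\boldsymbol{Q}}_1=\boldsymbol{1}+\epsilon\boldsymbol{0}.
\]
The last equality uses the stated equivalence $\mathring{\boldsymbol{Q}}\circ\mathring{\boldsymbol{Q}}^*=\boldsymbol{1}+\epsilon\boldsymbol{0}\Leftrightarrow\mathring{\boldsymbol{Q}}^*\circ\mathring{\boldsymbol{Q}}=\boldsymbol{1}+\epsilon\boldsymbol{0}$. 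Hence $\mathring{\boldsymbol{P}}\in\mathbb{DS}^3$.

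For the first equivalence, suppose $\mathring{\boldsymbol{Q}}_1=\mathring{\boldsymbol{Q}}_2$. Then $\mathring{\boldsymbol{P}}=\mathring{\boldsymbol{Q}}_1^*\circ\mathring{\boldsymbol{Q}}_1=\boldsymbol{1}+\epsilon\boldsymbol{0}$ by the same equivalence. Conversely, suppose $\mathring{\boldsymbol{Q}}_1^*\circ\mathring{\boldsymbol{Q}}_2=\boldsymbol{1}+\epsilon\boldsymbol{0}$. Left-multiply by $\mathring{\boldsymbol{Q}}_1$ and use associativity together with $\mathring{\boldsymbol{Q}}_1\circ\mathring{\boldsymbol{Q}}_1^*=\boldsymbol{1}+\epsilon\boldsymbol{0}$ to get $\mathring{\boldsymbol{Q}}_2=\mathring{\boldsymbol{Q}}_1\circ(\boldsymbol{1}+\epsilon\boldsymbol{0})=\mathring{\boldsymbol{Q}}_1$.

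The second equivalence runs the same way, using that $\circ$ is bilinear over real scalars, which is immediate from \eqref{eq13_f}. If $\mathring{\boldsymbol{Q}}_1=-\mathring{\boldsymbol{Q}}_2$, then $\mathring{\boldsymbol{P}}=-\mathring{\boldsymbol{Q}}_2^*\circ\mathring{\boldsymbol{Q}}_2=-\boldsymbol{1}+\epsilon\boldsymbol{0}$. Conversely, left-multiplying $\mathring{\boldsymbol{P}}=-\boldsymbol{1}+\epsilon\boldsymbol{0}$ by $\mathring{\boldsymbol{Q}}_1$ gives $\mathring{\boldsymbol{Q}}_2=-\mathring{\boldsymbol{Q}}_1$.

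The only step requiring care is associativity of $\circ$ on dual quaternions, which the proof leans on repeatedly. The paper asserts associativity only for real quaternions, though it also asserts \eqref{eq14_f} for dual quaternions. I would justify associativity by expanding $\mathring{\boldsymbol{Q}}_i=\boldsymbol{Q}_{r,i}+\epsilon\boldsymbol{Q}_{c,i}$. Then $\mathring{\boldsymbol{Q}}_1\circ\mathring{\boldsymbol{Q}}_2=\boldsymbol{Q}_{r,1}\circ\boldsymbol{Q}_{r,2}+\epsilon(\boldsymbol{Q}_{r,1}\circ\boldsymbol{Q}_{c,2}+\boldsymbol{Q}_{c,1}\circ\boldsymbol{Q}_{r,2})$, using bilinearity and $\epsilon^2=0$ from \eqref{eq8_f}. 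Associativity then follows from real quaternion associativity applied to the real and dual parts separately.
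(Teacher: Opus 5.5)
Your proof is correct, and your membership step is the same as the paper's. The difference lies in the converse implications $\mathring{\boldsymbol{Q}}_1^*\circ\mathring{\boldsymbol{Q}}_2=\pm\boldsymbol{1}+\epsilon\boldsymbol{0}\Rightarrow\mathring{\boldsymbol{Q}}_1=\pm\mathring{\boldsymbol{Q}}_2$. The paper argues in components. The real scalar part $\varepsilon_{r_1}\varepsilon_{r_2}+\boldsymbol{q}_{r_1}^\top\boldsymbol{q}_{r_2}=1$, combined with the unit-norm constraints, gives $(\varepsilon_{r_1}-\varepsilon_{r_2})^2+\|\boldsymbol{q}_{r_1}-\boldsymbol{q}_{r_2}\|_2^2=0$. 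The dual parts are then matched via the vector equation \eqref{eq58_f} and the constraints \eqref{eq56_f}, and the $-\boldsymbol{1}$ case is left as analogous. You instead cancel: you left-multiply by $\mathring{\boldsymbol{Q}}_1$ and use $\mathring{\boldsymbol{Q}}_1\circ\mathring{\boldsymbol{Q}}_1^*=\boldsymbol{1}+\epsilon\boldsymbol{0}$, so the lemma reduces to the fact that $\mathbb{DS}^3$ is a group under $\circ$ with conjugation as inverse. Your route is shorter, coordinate-free, and treats both signs at once. It also sidesteps the most delicate step of the component proof. The perpendicularity remark on \eqref{eq58_f} only shows that both sides vanish, which still allows $\boldsymbol{q}_{c_2}-\boldsymbol{q}_{c_1}=\lambda\boldsymbol{q}_{r_1}$ and $\varepsilon_{c_1}-\varepsilon_{c_2}=-\lambda\varepsilon_{r_1}$; it is the difference of the two orthogonality conditions in \eqref{eq56_f} that forces $\lambda=0$. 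The component proof, in return, makes explicit how the two unit constraints (unit-norm real part, real part orthogonal to dual part) pin down each entry. Your real/dual expansion justifying associativity of $\circ$ on dual quaternions is a worthwhile addition, since both proofs rely on it and the paper takes it for granted.
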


\begin{proof}
The proof is provided in Appendix \ref{appendix:lem1}.
\end{proof}

Operations such as conjugation and multiplication are known to preserve the unit dual quaternion structure. This ensures that the `difference' between two unit dual quaternions, measured via the product of one conjugate with the other, is itself a unit dual quaternion. This fact shows that unit dual quaternions are able to provide a unified representation of both rotation and translation.

\begin{rem}
This lemma formalizes how unit dual quaternions behave under conjugate multiplication. 
The product $\mathring{\boldsymbol{Q}}_1^* \circ \mathring{\boldsymbol{Q}}_2$ can be interpreted as the relative transformation from $\mathring{\boldsymbol{Q}}_1^*$ to $\mathring{\boldsymbol{Q}}_2$. 
If this product equals the identity, the two dual quaternions represent the same rigid-body pose; if it equals the negative identity, they represent the same pose but with opposite sign, which is physically equivalent. 
This property guarantees that control laws defined via dual quaternion differences remain geometrically consistent.
\end{rem}


\par As \emph{Chasles' theorem} states, the general displacement of a rigid body in space consists of a rotation about an axis (called the screw axis) and a translation parallel to that axis \cite{Chasles}. Based on this fact and on the definition of the Pl\"{u}cker line \cite{Pluker}, the position-attitude of frame $\mathcal{F}_2$ with respect to frame $\mathcal{F}_1$ can be compactly represented by a dual quaternion as
\begin{align}
\label{eq15_f}
\mathring{\boldsymbol{Q}} \triangleq \boldsymbol{Q} + \epsilon \frac{1}{2} \boldsymbol{Q} \circ \operatorname{aug}(\boldsymbol{P}),
\end{align}
where $\boldsymbol{Q} \in \mathbb{S}^3$ is the unit quaternion representing the rotation from $\mathcal{F}_1$ to $\mathcal{F}_2$, and $\boldsymbol{P} \in \mathbb{R}^3$ is the position vector pointing from the origin of $\mathcal{F}_1$ to the origin of $\mathcal{F}_2$, expressed in $\mathcal{F}_2$, while the operator $\operatorname{aug}(\cdot)$ is defined in \eqref{eq1_f} \cite{gws:21}.

The following lemma establishes the basic property of a dual quaternion used for representing rigid bodies' position and attitude.

\begin{lem}
\label{lem2}
Given a dual quaternion $\mathring{\boldsymbol{Q}}$ defined in \eqref{eq15_f}, we have $\mathring{\boldsymbol{Q}} \in \mathbb{DS}^3$, i.e., $\mathring{\boldsymbol{Q}}$ is a unit dual quaternion. Furthermore, the rotation $\boldsymbol{Q}$ and the translation $\boldsymbol{P}$ can be uniquely determined from $\mathring{\boldsymbol{Q}}$.
\end{lem}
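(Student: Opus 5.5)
To show $\mathring{\boldsymbol{Q}}\in\mathbb{DS}^3$, I will expand $\mathring{\boldsymbol{Q}}\circ\mathring{\boldsymbol{Q}}^*$ using the multiplication rule \eqref{eq13_f} together with $\epsilon^2=0$. Writing $\mathring{\boldsymbol{Q}}=\boldsymbol{Q}_r+\epsilon\boldsymbol{Q}_c$, the product is
\[
\mathring{\boldsymbol{Q}}\circ\mathring{\boldsymbol{Q}}^*=\boldsymbol{Q}_r\circ\boldsymbol{Q}_r^*+\epsilon\left(\boldsymbol{Q}_r\circ\boldsymbol{Q}_c^*+\boldsymbol{Q}_c\circ\boldsymbol{Q}_r^*\right).
\]
Here I use \eqref{eq12_f} and the distributivity and associativity of the quaternion product \eqref{eq6_f}, applied componentwise in the dual unit. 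Since $\boldsymbol{Q}_r=\boldsymbol{Q}\in\mathbb{S}^3$, \eqref{eq8_f_add1} gives real part $\boldsymbol{1}$.

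For the dual part, set $\boldsymbol{Q}_c=\tfrac12\boldsymbol{Q}\circ\operatorname{aug}(\boldsymbol{P})$. By \eqref{eq7_f}, $\boldsymbol{Q}_c^*=\tfrac12\operatorname{aug}(\boldsymbol{P})^*\circ\boldsymbol{Q}^*=-\tfrac12\operatorname{aug}(\boldsymbol{P})\circ\boldsymbol{Q}^*$, because $\operatorname{aug}(\boldsymbol{P})$ is a pure quaternion. Associativity then gives
\[
\boldsymbol{Q}\circ\boldsymbol{Q}_c^*+\boldsymbol{Q}_c\circ\boldsymbol{Q}^*=-\tfrac12\boldsymbol{Q}\circ\operatorname{aug}(\boldsymbol{P})\circ\boldsymbol{Q}^*+\tfrac12\boldsymbol{Q}\circ\operatorname{aug}(\boldsymbol{P})\circ\boldsymbol{Q}^*=\boldsymbol{0}.
\]
Hence $\mathring{\boldsymbol{Q}}\circ\mathring{\boldsymbol{Q}}^*=\boldsymbol{1}+\epsilon\boldsymbol{0}$, and $\mathring{\boldsymbol{Q}}\in\mathbb{DS}^3$.

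For uniqueness, suppose $\mathring{\boldsymbol{Q}}$ is given. I will read off $\boldsymbol{Q}=\operatorname{real}(\mathring{\boldsymbol{Q}})$, which is determined by the decomposition into real and dual parts. Then $\boldsymbol{Q}_c=\operatorname{comp}(\mathring{\boldsymbol{Q}})=\tfrac12\boldsymbol{Q}\circ\operatorname{aug}(\boldsymbol{P})$. Left-multiplying by $2\boldsymbol{Q}^*$ and using $\boldsymbol{Q}^*\circ\boldsymbol{Q}=\boldsymbol{1}$ gives
\[
\operatorname{aug}(\boldsymbol{P})=2\boldsymbol{Q}^*\circ\boldsymbol{Q}_c.
\]
The identity $\boldsymbol{Q}^*\circ\boldsymbol{Q}=\boldsymbol{1}$ follows from \eqref{eq8_f_add1}, since the scalar part of either product is $\varepsilon^2+\boldsymbol{q}^\top\boldsymbol{q}$ and the vector parts cancel. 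Also $\boldsymbol{1}$ is the multiplicative identity, which one checks directly from \eqref{eq6_f}. Therefore $\boldsymbol{P}=2\operatorname{red}(\boldsymbol{Q}^*\circ\boldsymbol{Q}_c)$ is uniquely determined.

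Alternatively, uniqueness can be phrased as injectivity. If $(\boldsymbol{Q}_1,\boldsymbol{P}_1)$ and $(\boldsymbol{Q}_2,\boldsymbol{P}_2)$ produce the same dual quaternion, then comparing real parts gives $\boldsymbol{Q}_1=\boldsymbol{Q}_2$. Comparing dual parts and cancelling $\boldsymbol{Q}_1$ on the left, by multiplying by $\boldsymbol{Q}_1^*$, gives $\boldsymbol{P}_1=\boldsymbol{P}_2$.

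The only point needing care is the \emph{sign ambiguity}: both $\boldsymbol{Q}$ and $-\boldsymbol{Q}$ represent the same physical attitude. This does not break uniqueness. As a map from $\mathring{\boldsymbol{Q}}$, the quaternion $\boldsymbol{Q}$ is its real part exactly. Moreover, $-\mathring{\boldsymbol{Q}}$ yields $-\boldsymbol{Q}$ together with the same $\boldsymbol{P}$, since $(-\boldsymbol{Q})^*\circ(-\boldsymbol{Q}_c)=\boldsymbol{Q}^*\circ\boldsymbol{Q}_c$. This is consistent with Lemma \ref{lem1}. Beyond this remark, the argument is routine quaternion algebra.
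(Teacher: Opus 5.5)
Your proposal is correct and follows essentially the same route as the paper: you expand $\mathring{\boldsymbol{Q}}\circ\mathring{\boldsymbol{Q}}^*$ using $\epsilon^2=0$ and the conjugate $\boldsymbol{Q}^*-\epsilon\frac{1}{2}\operatorname{aug}(\boldsymbol{P})\circ\boldsymbol{Q}^*$, so the dual part cancels. You then recover $\boldsymbol{Q}$ as the real part and $\operatorname{aug}(\boldsymbol{P})=2\boldsymbol{Q}^*\circ\operatorname{comp}(\mathring{\boldsymbol{Q}})$ by associativity, exactly as the paper does, with your injectivity and sign-ambiguity remarks being correct extras the paper does not spell out.
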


\begin{proof}
The proof is provided in Appendix \ref{appendix:lem2}.
\end{proof}

\begin{rem}
\label{remark2}
According to Lemma \ref{lem2}, the dual quaternion $\mathring{\boldsymbol{Q}} \in \mathbb{DS}^3$ provides a consistent and compact representation of both rotation and translation for a rigid body. While alternative representations like $\boldsymbol{Q} + \epsilon \operatorname{aug}(\boldsymbol{P})$ might seem plausible, they introduce difficulties in subsequent control design.
\end{rem}

\begin{rem}
Lemma \ref{lem2} implies that dual quaternions rotate and translate vectors without altering their intrinsic properties. The scalar part of a transformed dual vector remains zero, consistent with the fact that physical position vectors do not have a scalar component. Meanwhile, the length of the vector is preserved, reflecting that rigid-body motions do not change distances in space. This property is essential for control applications, as it ensures that dual quaternion transformations maintain accurate and consistent tracking of both position and orientation.
\end{rem}

The following lemma establishes the calculation of the three-dimensional vector and the unit dual quaternions.

\begin{lem}
\label{lem4}
Let $\mathring{\boldsymbol{x}} \triangleq \boldsymbol{x}_r + \epsilon \boldsymbol{x}_c \in \mathbb{DR}^3$, $\mathring{\boldsymbol{y}} \triangleq \boldsymbol{y}_r + \epsilon \boldsymbol{y}_c \in \mathbb{DR}^4$, and $\mathring{\boldsymbol{Q}} \in \mathbb{DS}^3$. Suppose
\[
\mathring{\boldsymbol{y}} = \mathring{\boldsymbol{Q}}^* \circ \operatorname{aug}(\mathring{\boldsymbol{x}}) \circ \mathring{\boldsymbol{Q}}.
\]
Then the scalar part of $\mathring{\boldsymbol{y}}$ is $0 + \epsilon 0$, and $\|\boldsymbol{x}_r\|_2 = \|\boldsymbol{y}_r\|_2$.
\end{lem}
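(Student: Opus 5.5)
The plan is to split everything into real and dual parts and reduce both claims to statements about ordinary quaternions. Write $\mathring{\boldsymbol{Q}}=\boldsymbol{Q}_r+\epsilon\boldsymbol{Q}_c$ with $\boldsymbol{Q}_r\in\mathbb{S}^3$ and $\boldsymbol{Q}_r^{\top}\boldsymbol{Q}_c=0$; these are the two properties of elements of $\mathbb{DS}^3$ listed just before Lemma \ref{lem1}. By linearity of $\operatorname{aug}(\cdot)$ we have $\operatorname{aug}(\mathring{\boldsymbol{x}})=\operatorname{aug}(\boldsymbol{x}_r)+\epsilon\operatorname{aug}(\boldsymbol{x}_c)$. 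Bilinearity of \eqref{eq13_f} and $\epsilon^2=0$ then give
\begin{align*}
\boldsymbol{y}_r&=\boldsymbol{Q}_r^*\circ\operatorname{aug}(\boldsymbol{x}_r)\circ\boldsymbol{Q}_r,\\
\boldsymbol{y}_c&=\boldsymbol{Q}_c^*\circ\operatorname{aug}(\boldsymbol{x}_r)\circ\boldsymbol{Q}_r+\boldsymbol{Q}_r^*\circ\operatorname{aug}(\boldsymbol{x}_c)\circ\boldsymbol{Q}_r+\boldsymbol{Q}_r^*\circ\operatorname{aug}(\boldsymbol{x}_r)\circ\boldsymbol{Q}_c.
\end{align*}

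\textbf{Real part.} I will use the elementary fact that a quaternion $\boldsymbol{A}$ has zero scalar part if and only if $\boldsymbol{A}^*=-\boldsymbol{A}$. Since $\operatorname{aug}(\boldsymbol{x}_r)^*=-\operatorname{aug}(\boldsymbol{x}_r)$, the identity \eqref{eq7_f} gives $\boldsymbol{y}_r^*=-\boldsymbol{y}_r$, so the scalar part of $\boldsymbol{y}_r$ is zero. For the norm I use multiplicativity of the quaternion norm, $\|\boldsymbol{A}\circ\boldsymbol{B}\|_2=\|\boldsymbol{A}\|_2\|\boldsymbol{B}\|_2$. This can be checked from \eqref{eq6_f} by expanding $(\boldsymbol{A}\circ\boldsymbol{B})\circ(\boldsymbol{A}\circ\boldsymbol{B})^*$, using \eqref{eq7_f} and associativity, and noting that $\boldsymbol{B}\circ\boldsymbol{B}^*=\|\boldsymbol{B}\|_2^2\boldsymbol{1}$ is a real scalar and so commutes with everything. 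Since $\|\boldsymbol{Q}_r\|_2=1$, this yields $\|\boldsymbol{y}_r\|_2=\|\operatorname{aug}(\boldsymbol{x}_r)\|_2=\|\boldsymbol{x}_r\|_2$.

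\textbf{Dual part.} This is the step I expect to be the main obstacle, because the cross terms involve $\boldsymbol{Q}_c$, which is not a unit quaternion. The middle term of $\boldsymbol{y}_c$ has zero scalar part by the same conjugation argument as above. The first and third terms are conjugates of each other up to sign: by \eqref{eq7_f},
\[
\left(\boldsymbol{Q}_c^*\circ\operatorname{aug}(\boldsymbol{x}_r)\circ\boldsymbol{Q}_r\right)^*=-\boldsymbol{Q}_r^*\circ\operatorname{aug}(\boldsymbol{x}_r)\circ\boldsymbol{Q}_c.
\]
Hence their sum has the form $\boldsymbol{A}-\boldsymbol{A}^*$, whose scalar part is zero. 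This shows the scalar part of $\boldsymbol{y}_c$ vanishes.

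Notice that this argument actually never needs the orthogonality $\boldsymbol{Q}_r^{\top}\boldsymbol{Q}_c=0$, only the conjugation identity \eqref{eq7_f} together with the pure-vector character of $\operatorname{aug}(\cdot)$. Combining the real and dual parts, the scalar part of $\mathring{\boldsymbol{y}}$ is $0+\epsilon 0$, and the real-part computation gives $\|\boldsymbol{x}_r\|_2=\|\boldsymbol{y}_r\|_2$.
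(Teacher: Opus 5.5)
Your proof is correct, but it takes a more structural route than the paper's.

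\textbf{The paper's route.} The paper expands $\mathring{\boldsymbol{Q}}^*\circ\operatorname{aug}(\mathring{\boldsymbol{x}})\circ\mathring{\boldsymbol{Q}}$ componentwise in dual-number arithmetic, keeping $\mathring{\varepsilon}$ and $\mathring{\boldsymbol{q}}$ unsplit. It kills the scalar part with the identity $\mathring{\boldsymbol{q}}^\top\mathring{\boldsymbol{q}}^\times=\boldsymbol{0}+\epsilon\boldsymbol{0}$. It simplifies the vector part via $\mathring{\boldsymbol{q}}^\times\mathring{\boldsymbol{q}}^\times\mathring{\boldsymbol{x}}=\mathring{\boldsymbol{q}}\mathring{\boldsymbol{q}}^\top\mathring{\boldsymbol{x}}-\mathring{\boldsymbol{x}}\mathring{\boldsymbol{q}}^\top\mathring{\boldsymbol{q}}$. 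It then reads off the real part as $\boldsymbol{R}\boldsymbol{x}_r$, with $\boldsymbol{R}=(\varepsilon_r^2-\boldsymbol{q}_r^\top\boldsymbol{q}_r)\boldsymbol{I}_3+2\boldsymbol{q}_r\boldsymbol{q}_r^\top-2\varepsilon_r\boldsymbol{q}_r^\times$. The paper simply states that $\boldsymbol{R}$ is orthogonal when $\varepsilon_r^2+\boldsymbol{q}_r^\top\boldsymbol{q}_r=1$ and leaves the check to the reader.

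\textbf{Your route.} You split into real and dual parts and use only two facts. The first is the anti-automorphism \eqref{eq7_f}, via the observation that pure quaternions are exactly those with $\boldsymbol{A}^*=-\boldsymbol{A}$. The second is multiplicativity of the quaternion norm, which you correctly derive from $\boldsymbol{B}\circ\boldsymbol{B}^*=\|\boldsymbol{B}\|_2^2\boldsymbol{1}$.

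\textbf{Comparison.} Your argument is coordinate-free, and it replaces the paper's unverified orthogonality claim with a short, fully justified norm identity. It also makes clear, as you note, that the scalar-part claim needs neither the unit-norm nor the orthogonality property of $\mathbb{DS}^3$. For that claim you could even skip the real/dual split: apply \eqref{eq14_f} directly to get $\mathring{\boldsymbol{y}}^*=-\mathring{\boldsymbol{y}}$. Then the scalar part $\mathring{y}_0$ of $\mathring{\boldsymbol{y}}$ satisfies $\mathring{y}_0=-\mathring{y}_0$, so $2\mathring{y}_0=0$ and hence $\mathring{y}_0=0+\epsilon 0$. What the paper's computation gives in return is an explicit rotation-matrix formula for $\boldsymbol{y}_r$ in terms of $\boldsymbol{x}_r$. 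That formula offers a concrete picture of the transport map $\operatorname{tr}(\cdot)$ used later in the error dynamics.
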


\begin{proof}
The proof is provided in Appendix \ref{appendix:lem4}.
\end{proof}

\begin{rem}
Lemma \ref{lem4} demonstrates that applying a unit dual quaternion transformation to a dual vector preserves essential geometric properties. Specifically, the scalar part remains zero, consistent with the physical interpretation of position vectors, and the length of the real vector is unchanged. This ensures that rigid-body motions represented by dual quaternions maintain distances and orientations correctly, which is crucial for accurate and physically consistent position-attitude tracking in control applications.
\end{rem}

\section{Motion Kinematics and Dynamics}
\label{sec_dynamics}

In this section, we first define three coordinate frames that facilitate the description of rigid-body motions in proximity operations. Fig. \ref{fig-1} provides an intuitive illustration of coupled position-attitude tracking, highlighting the interplay between translational and rotational motions. Building on the preliminaries introduced in Section \ref{sec_Preliminaries}, we then formulate the actual position-attitude kinematics and dynamics using a dual-number representation, which unifies position and attitude errors into a single mathematical object and enables coordinated analysis. This is followed by the error position-attitude kinematics and dynamics, which characterize the tracking process and form the basis for the dual-number-based adaptive ILC framework.

\begin{figure}[!hbtp]
	\centering
	\includegraphics[trim=9.5cm 1.3cm 9.5cm 1.3cm,width=0.35\textwidth,clip]{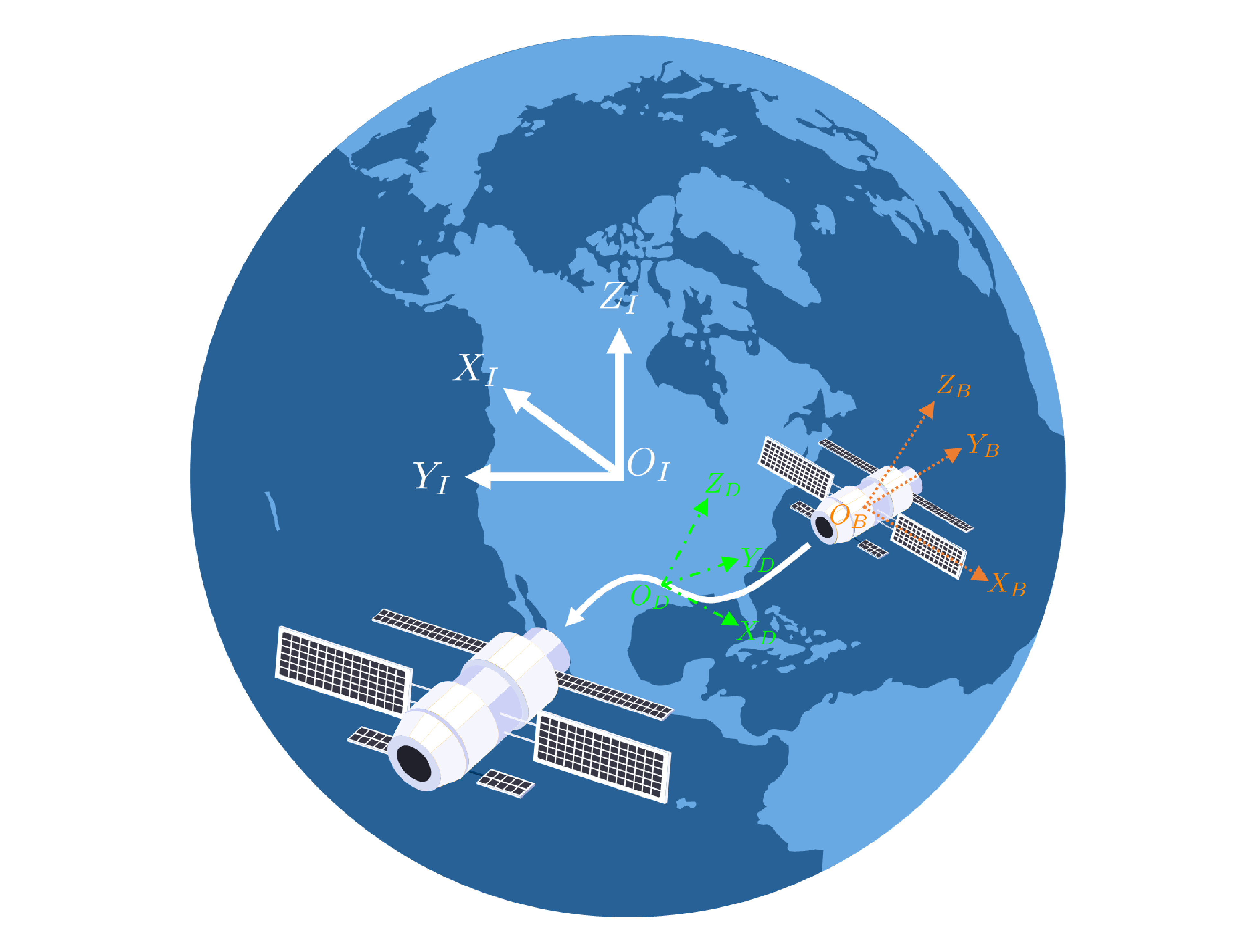}
	\caption{Demonstration of the couple position-attitude tracking in the rigid body proximity operation.
	}
	\label{fig-1}
\end{figure}
\subsection{Coordinate Frames}

To facilitate the subsequent design and analysis of position-attitude tracking, we use the definitions of three orthogonal coordinate frames as given in \cite{Ruiter}. 

\begin{enumerate}
    \item \textbf{Body-fixed frame $\mathcal{F}_B$ ($O_B-X_BY_BZ_B$):}
    \begin{itemize}
        \item \textit{Origin:} located at the center of mass of the rigid body.
        \item \textit{Orientation:} rigidly attached to the body.
        \item \textit{Property:} all points of the rigid body are fixed in this frame, consistent with the rigid body assumption.
    \end{itemize}

    \item \textbf{Inertial frame $\mathcal{F}_I$ ($O_I-X_IY_IZ_I$):}
    \begin{itemize}
        \item \textit{Origin:} located on Earth.
        \item \textit{Orientation:}fixed in space, not rotating with the Earth.
        \item \textit{Property:} any frame that is stationary or moves at constant velocity (without rotation) relative to an inertial frame is also considered an inertial frame.
    \end{itemize}
    \item \textbf{Desired reference frame $\mathcal{F}_D$ ($O_D-X_DY_DZ_D$):}
    \begin{itemize}
    \item \textit{Origin:} represents the desired position of the rigid body.
    \item \textit{Orientation:} represents the desired orientation of the rigid body.
    \end{itemize}
\end{enumerate}

These coordinate frames are illustrated in Fig. \ref{fig-1}. The green dash-dotted arrows represent the desired reference frame, the orange dotted arrows indicate the body-fixed frame, and the white solid arrows denote the inertial frame.

\subsection{Actual Attitude-Position Kinematics and Dynamics}

Based on the body-fixed frame $\mathcal{F}_B$ and the inertial frame $\mathcal{F}_I$, the motion of a rigid body is described using a unit dual quaternion, which compactly represents its orientation and position. The orientation is captured by a unit quaternion describing the rotation of the body-fixed frame relative to the inertial frame, while the position is represented by a vector from the inertial origin to the body-fixed origin, expressed in the body-fixed frame. By combining these into a single dual quaternion, both rotational and translational motions can be treated in a unified framework.

Based on \eqref{eq15_f}, at operation time $t \in [0, T]$ with terminal time $T<\infty$ during iteration $k \in \mathbb{Z}_{+}$, the rigid body's unit dual quaternion is given by
\begin{align}
\label{eq16_f}
\mathring{\boldsymbol{Q}}_k(t) \triangleq \boldsymbol{Q}_k(t) + \epsilon \frac{1}{2} \boldsymbol{Q}_k(t) \circ \operatorname{aug}(\boldsymbol{P}_k(t)),
\end{align}
where $\boldsymbol{Q}_k(t)\in\mathbb{S}^3$ denotes the quaternion representing orientation of $\mathcal{F}_B$ with respect to $\mathcal{F}_I$, $\boldsymbol{P}_k(t)\in\R^3$ denotes the position vector pointing from the origin of $\mathcal{F}_I$ to the origin of $\mathcal{F}_B$, expressed in $\mathcal{F}_B$. The reason for this construction has been explained in Remark \ref{remark2}.

Based on the definition of $\mathring{\boldsymbol{Q}}_k(t)$ in \eqref{eq16_f}, and consistent with the formulation presented in \cite{gws:21}, the position-attitude kinematics and dynamics are given by
\begin{align}
\dot{\mathring{\boldsymbol{Q}}}_k(t) &= \frac{1}{2} \mathring{\boldsymbol{Q}}_k(t) \circ \operatorname{aug}(\mathring{\boldsymbol{\omega}}_k(t)),
\label{eq17_f}
\\
\mathring{\boldsymbol{M}} \dot{\mathring{\boldsymbol{\omega}}}_k(t) &= -\mathring{\boldsymbol{\omega}}_k^\times(t) \mathring{\boldsymbol{M}} \mathring{\boldsymbol{\omega}}_k(t) + \mathring{\boldsymbol{f}}_k(t) + \mathring{\boldsymbol{d}}_k(t),
\label{eq21_f}
\end{align}
where the dual (twist) velocity is
\begin{align*}
\mathring{\boldsymbol{\omega}}_k(t) \triangleq \boldsymbol{\omega}_k(t) + \epsilon \boldsymbol{v}_k(t).
\end{align*}
In this expression, $\boldsymbol{\omega}_k(t)\in\R^3$ denotes the angular velocity of $\mathcal{F}_B$ with respect to $\mathcal{F}_I$, expressed in $\mathcal{F}_B$, and $\boldsymbol{v}_k(t)\in\R^3$ denotes the linear velocity of the origin of $\mathcal{F}_B$ with respect to the origin of $\mathcal{F}_I$, also expressed in $\mathcal{F}_B$, leading to 
\begin{align}
\label{eq20_f_add}
\boldsymbol{v}_k(t) = \dot{\boldsymbol{P}}_k(t) + \boldsymbol{\omega}_k^\times(t) \boldsymbol{P}_k(t).
\end{align}
The dual inertia matrix in \eqref{eq21_f} is defined as
\begin{align}
\label{eq22_f}
\mathring{\boldsymbol{M}} \triangleq m \boldsymbol{I}_3 \frac{\text{d}}{\text{d}\epsilon} + \epsilon \boldsymbol{J},
\end{align}
where $\frac{\text{d}}{\text{d}\epsilon}$ extracts the dual part of a dual vector and can be regarded as an operator. In \eqref{eq21_f}, $\mathring{\boldsymbol{f}}_k(t) \triangleq \boldsymbol{f}_k(t) + \epsilon \boldsymbol{\tau}_k(t)$ denotes the dual control input and $\mathring{\boldsymbol{d}}_k(t) \triangleq \boldsymbol{d}_k(t) + \epsilon \boldsymbol{s}_k(t)$ denotes the dual disturbance. In this context, $\boldsymbol{f}_k(t)\in\R^3$, $\boldsymbol{\tau}_k(t)\in\R^3$, $\boldsymbol{d}_k(t)\in\R^3$, and $ \boldsymbol{s}_k(t)\in\R^3$ represent the control force, control torque, disturbance force, and
disturbance torque expressed in $\mathcal{F}_B$, respectively.

\begin{rem}
Expanding the dynamics \eqref{eq21_f} explicitly shows the coupling between rotation and translation:
\begin{align*}
&{}m \ddot{\boldsymbol{P}}_k(t)
+
\epsilon
\boldsymbol{J}
\dot{\boldsymbol{\omega}}_k(t)
\nonumber
\\
={}&
-m
\left(
\dot{\boldsymbol{\omega}}_k^\times(t) \boldsymbol{P}_k(t)
+
2{\boldsymbol{\omega}}_k^\times(t) \dot{\boldsymbol{P}}_k(t)
+
\left({\boldsymbol{\omega}}_k^\times(t) \right)^2
 \boldsymbol{P}_k(t)
\right)
\nonumber
\\
&
+\boldsymbol{f}_k(t)+\boldsymbol{d}_k(t)
+
\epsilon
\left(
-
{\boldsymbol{\omega}}_k^\times(t)
\boldsymbol{J}
{\boldsymbol{\omega}}_k(t)
+\boldsymbol\tau_k(t)
+\boldsymbol{s}_k(t)
\right).
\end{align*}
The translational acceleration depends explicitly on the angular velocity and the angular acceleration, highlighting the intrinsic coupling between translation and rotation in the rigid-body motion.
\end{rem}

\begin{rem}
In \eqref{eq20_f_add}, it is important to note that $\boldsymbol{v}_k(t) = \dot{\boldsymbol{P}}_k(t)$ holds only if $\boldsymbol{\omega}_k(t)=\boldsymbol{0}$ and/or $\boldsymbol{P}_k(t)=\boldsymbol{0}$. This condition arises because $\boldsymbol{v}_k(t)$ and $\boldsymbol{P}_k(t)$ are not expressed in the inertial frame $\mathcal{F}_I$. On the other hand, if the spacecraft is stationary relative to the inertial frame, i.e., $\boldsymbol{v}(t)=\boldsymbol{0}$, it is not hard to verify that $\dot{\boldsymbol{P}}_k(t)
=-\boldsymbol{\omega}_k^\times(t) \boldsymbol{P}_k(t)$.
\end{rem}


\subsection{Error Attitude-Position Kinematics and Dynamics}

Following the discussion on the actual attitude-position kinematics and dynamics, this subsection proceeds to introduce the desired attitude-position kinematics and dynamics. Subsequently, we delve into the kinematic and dynamic equations that govern the errors in both attitude and position. The utilization of unit dual quaternions for a unified representation provides a coherent framework, significantly simplifying the subsequent control design processes.

Let $\mathring{\boldsymbol{Q}}_d(t) \in \mathbb{DS}^{3}$ denote the desired dual unit quaternion of $\mathcal{F}_D$ with respect to $\mathcal{F}_I$. Similarly, let $\mathring{\boldsymbol{\omega}}_d(t) \in \mathbb{DR}^{3}$ denote the desired dual velocity of $\mathcal{F}_D$ with respect to $\mathcal{F}_I$, expressed in $\mathcal{F}_D$. Therefore, it is evident that $\mathring{\boldsymbol{Q}}_d(t)$ and $\mathring{\boldsymbol{\omega}}_d(t)$ adhere to the same kinematics as presented in \eqref{eq17_f}, that is,
\begin{align}
\label{eq23_f}
\dot{\mathring{\boldsymbol{Q}}}_d(t)
=\frac{1}{2}\mathring{\boldsymbol{Q}}_d(t)\circ\operatorname{aug}\left(\mathring{\boldsymbol{\omega}}_d(t)\right).
\end{align}
\noindent Generally, the real and imaginary parts of $\mathring{\boldsymbol{{\omega}}}_d(t)$ and $\dot{\mathring{\boldsymbol{\omega}}}_d(t)$ are uniformly bounded for all $t \in [0, T]$. 

\par The dual unit quaternion error $\delta\mathring{\boldsymbol{Q}}_k(t)
\in \mathbb{DS}^3$ denotes the error information between $\mathring{\boldsymbol{Q}}_d(t)$ and $\mathring{\boldsymbol{Q}}_k(t)$. Specifically, based on \eqref{eq13_f} and Lemma \ref{lem1}, we have
\begin{align*}
\delta\mathring{\boldsymbol{Q}}_k(t)
\triangleq \delta{\boldsymbol{Q}}_k(t)
+
\epsilon \frac{1}{2} \delta{\boldsymbol{Q}}_k(t) \circ \operatorname{aug}\left(\delta{\boldsymbol{P}}_k(t)\right)
,
\end{align*}
where
\begin{align*}
\delta{\boldsymbol{Q}}_k(t) \triangleq {\boldsymbol{Q}}^{*}_d(t)
\circ
{\boldsymbol{Q}}_k(t)
\end{align*}
and
\begin{align*}
\delta{\boldsymbol{P}}_k(t) \triangleq {\boldsymbol{P}}_k(t) -
\operatorname{red}\left(
\delta{\boldsymbol{Q}}^*_k(t) \circ \operatorname{aug}\left({\boldsymbol{P}}_d(t)\right) \circ \delta{\boldsymbol{Q}}_k(t)\right).
\end{align*}
Based on this definition, it follows that
\begin{align}
\label{eq24_f}
\delta\mathring{\boldsymbol{Q}}_k(t)
=
\mathring{\boldsymbol{Q}}^{*}_d(t)
\circ
\mathring{\boldsymbol{Q}}_k(t) \in \mathbb{DS}^3,
\end{align}
\noindent which, in conjunction with \eqref{eq12_f}, \eqref{eq17_f}, and \eqref{eq23_f}, gives the derivative of $\delta\mathring{\boldsymbol{Q}}_k(t)$ as
\begin{align}
\label{eq25_f}
\delta\dot{\mathring{\boldsymbol{Q}}}_k(t)
=
\frac{1}{2}
\delta\mathring{\boldsymbol{Q}}_k(t)
\circ
\operatorname{aug}\left(
\delta\mathring{\boldsymbol{\omega}}_k(t)\right),
\end{align}
\noindent where the associative law of \eqref{eq13_f} and
\begin{align}
\label{eq26_f}
\operatorname{tr}
\left(\mathring{\boldsymbol{x}}
\right)
&\triangleq 
\operatorname{red}
\left(\delta\mathring{\boldsymbol{Q}}^{*}_k(t)
\circ
\operatorname{aug}\left(\mathring{\boldsymbol{x}}\right)
\circ
\delta\mathring{\boldsymbol{Q}}_k(t)\right),\,\forall \mathring{\boldsymbol{x}}\in\mathbb{DR}^3
\nonumber
\\
\delta\mathring{\boldsymbol{\omega}}_k(t)
&\triangleq \mathring{\boldsymbol{\omega}}_k(t)
-
\operatorname{tr}
\left(\mathring{\boldsymbol{\omega}}_{d}(t)
\right)
\end{align}
are utilized. Additionally, denoting the derivative of $\delta\mathring{\boldsymbol{\omega}}_k(t)$ as $\delta\dot{\mathring{\boldsymbol{\omega}}}_k(t)$, we obtain
\begin{align}
\label{eq27_f}
\mathring{\boldsymbol{M}}\delta\dot{\mathring{\boldsymbol{\omega}}}_k(t)
={}&
-\mathring{\boldsymbol{M}}
\left(\mathring{\boldsymbol{\omega}}^\times_k(t)
\delta{\mathring{\boldsymbol{\omega}}}_k(t)
\right)-\mathring{\boldsymbol{\omega}}^\times_k(t) \mathring{\boldsymbol{M}}\mathring{\boldsymbol{\omega}}_k(t)
\nonumber
\\
{}&
-\mathring{\boldsymbol{M}}
\operatorname{tr}\left(\dot{\mathring{\boldsymbol{\omega}}}_{d}(t)\right)
+\mathring{\boldsymbol{f}}_k(t)+\mathring{\boldsymbol{d}}_k(t)
,
\end{align}
\noindent where $\mathring{\boldsymbol{M}}$ is defined in \eqref{eq22_f}, and Lemma \ref{lem4} is employed. The derivation of \eqref{eq27_f} is detailed in Appendix \ref{appendix:eq25}. By leveraging \eqref{eq25_f} and \eqref{eq27_f}, the error position-attitude kinematics and dynamics are presented.

\begin{rem}
\label{rem2_add}
It is crucial to note that the distinction between $\mathring{\boldsymbol{Q}}_d(t)$ and $\mathring{\boldsymbol{Q}}_k(t)$ can be articulated through $\delta\mathring{\boldsymbol{Q}}_k(t)$, as defined in \eqref{eq24_f}. According to Lemmas \ref{lem1} and \ref{lem2}, $\delta\mathring{\boldsymbol{Q}}_k(t)=\pm\boldsymbol{1}+\epsilon \boldsymbol{0}$ is equivalent to $\mathring{\boldsymbol{Q}}_d(t)=\pm\mathring{\boldsymbol{Q}}_k(t)$, which means that the actual physical attitude and position align with their desired counterparts. Contrary to the error representation given by $\mathring{\boldsymbol{Q}}_d(t)-\mathring{\boldsymbol{Q}}_k(t)$, the quantity $\delta\mathring{\boldsymbol{Q}}_k(t)$ defined in \eqref{eq24_f} naturally adheres to the principles of unit dual quaternions, offering a clear physical interpretation.
\end{rem}

\subsection{Control Objectives}\label{sec_Problem}


This subsection presents the control objectives for position-attitude tracking. We first state the necessary assumptions for the theoretical analysis and practical relevance, and then formulate the specific tracking objectives.

The following two assumptions are widely used in ILC literature.

\begin{ass}\cite{zml:21}
\label{ass1_dq}
The initial conditions satisfy
\[
\delta\mathring{\boldsymbol{Q}}_k(0)=\pm\boldsymbol{1}+\epsilon \boldsymbol{0},\quad 
\delta\mathring{\boldsymbol{\omega}}_k(0)=\boldsymbol{0}+\epsilon \boldsymbol{0}, \quad \forall k\in\mathbb{Z}_+.
\]
\end{ass}

\begin{ass}
\label{ass2_dq}\cite{gws:21}
The system parameters and disturbances are bounded, i.e.,
\beaN
m \leq \beta_m, \quad \|\boldsymbol{J}\| \leq \beta_J, \quad \|\boldsymbol{d}_k(t)\| \leq \beta_d(t), 
\\
\|\boldsymbol{s}_k(t)\| \leq \beta_s(t), \quad \forall k\in\mathbb{Z}_+, \quad \forall t\in[0,T],
\eeaN
where $\beta_m<\infty$, $\beta_J<\infty$, $\beta_d(t)\in \mathscr{L}_{\infty e}$, and $\beta_s(t)\in \mathscr{L}_{\infty e}$ are unknown.
\end{ass}

With these assumptions, the control objective of this work is summarized as follows. 

For a rigid body governed by \eqref{eq17_f} and \eqref{eq21_f} with unknown system parameters, a sequence of control inputs to achieve simultaneous, exact position-attitude tracking of a six-degree-of-freedom trajectory $\mathring{\boldsymbol{Q}}_d(t)$ under \(\mathscr{L}_{\infty e}\)-norm bounded inputs over finite time intervals. Specifically, we aim to develop a robust adaptive ILC method such that:

\begin{enumerate}
\item[(C1)] \textbf{Perfect position-attitude tracking} is achieved over finite time intervals:
\[
\lim_{k\rightarrow\infty}\mathring{\boldsymbol{Q}}_k(t)= \mathring{\boldsymbol{Q}}_d(t),\quad \forall t\in[0,T],
\]
which, based on \eqref{eq24_f}, is equivalent to
\begin{equation}
\label{eq28_f}
\left\{
\begin{aligned}
\lim_{k\rightarrow \infty}\|\delta{\boldsymbol{P}}_k(t)\|_2 &= 0, \\
\lim_{k\rightarrow \infty}\|\delta{\boldsymbol{q}}_k(t)\|_2 &= 0,
\end{aligned}
\right.
\end{equation}
for all \(t\in[0,T]\), according to Lemma \ref{lem1}, where \(\delta{\boldsymbol{q}}_k(t) = \operatorname{red}\left(\delta{\boldsymbol{Q}}_k(t)\right)\).

\item[(C2)] \textbf{Uniform boundedness} of the real and dual parts of \(\mathring{\boldsymbol{f}}_k(t)\), \(\delta\mathring{\boldsymbol{Q}}_k(t)\), and \(\delta\mathring{\boldsymbol{\omega}}_k(t)\) in the \(\mathscr{L}_{\infty e}\)-norm sense. This enhances robustness against overestimation of torque and force uncertainties, which can not be achieved by adaptive ILC.
\end{enumerate}

\section{Controller Design and Analysis}
\label{sec_controller}

This section presents the dual learning control design, followed by a rigorous analysis of convergence and closed-loop boundedness properties.

\subsection{Two-Loop ILC Design}

Before proceeding to controller design, it is crucial to introduce two additional lemmas.

\begin{lem}
\label{lem5}
Let $\mathring{\boldsymbol{x}},\,\mathring{\boldsymbol{y}}\in\mathbb{DR}^3$ be two dual vectors. Then,
\begin{align}
\label{eq29_f}
\operatorname{crs}\left(\mathring{\boldsymbol{x}},-\mathring{\boldsymbol{M}}\left(\mathring{\boldsymbol{y}}^\times\mathring{\boldsymbol{x}}\right)\right)
=
\operatorname{crs}\left(\mathring{\boldsymbol{x}},\mathring{\boldsymbol{y}}^\times\mathring{\boldsymbol{M}}\mathring{\boldsymbol{x}}\right),
\end{align}
where $\operatorname{crs}\left(\cdot\right)$ is defined in \eqref{eq9_f}, and $\mathring{\boldsymbol{M}}$ is defined in \eqref{eq22_f}.
\end{lem}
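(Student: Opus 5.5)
The plan is a direct componentwise computation: write both dual vectors as $\mathring{\boldsymbol{x}}=\boldsymbol{x}_r+\epsilon\boldsymbol{x}_c$ and $\mathring{\boldsymbol{y}}=\boldsymbol{y}_r+\epsilon\boldsymbol{y}_c$, expand each side of \eqref{eq29_f} into real vectors, and match the resulting scalars term by term using scalar triple-product identities. The only facts needed are the following.
\begin{itemize}
\item Since $\epsilon^2=0$, the dual cross product is $\mathring{\boldsymbol{y}}^\times\mathring{\boldsymbol{x}}=\boldsymbol{y}_r^\times\boldsymbol{x}_r+\epsilon(\boldsymbol{y}_r^\times\boldsymbol{x}_c+\boldsymbol{y}_c^\times\boldsymbol{x}_r)$.
\item By \eqref{eq22_f}, the dual inertia acts on $\mathring{\boldsymbol{z}}=\boldsymbol{z}_r+\epsilon\boldsymbol{z}_c$ as $\mathring{\boldsymbol{M}}\mathring{\boldsymbol{z}}=m\boldsymbol{z}_c+\epsilon\boldsymbol{J}\boldsymbol{z}_r$.
\item By \eqref{eq9_f}, $\operatorname{crs}(\mathring{\boldsymbol{x}},\mathring{\boldsymbol{w}})=\boldsymbol{x}_r^\top\boldsymbol{w}_c+\boldsymbol{x}_c^\top\boldsymbol{w}_r$.
\end{itemize}

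For the left-hand side, I would first compute
\[
\mathring{\boldsymbol{M}}(\mathring{\boldsymbol{y}}^\times\mathring{\boldsymbol{x}})=m(\boldsymbol{y}_r^\times\boldsymbol{x}_c+\boldsymbol{y}_c^\times\boldsymbol{x}_r)+\epsilon\boldsymbol{J}\boldsymbol{y}_r^\times\boldsymbol{x}_r .
\]
Applying $\operatorname{crs}$ then gives
\[
-\boldsymbol{x}_r^\top\boldsymbol{J}\boldsymbol{y}_r^\times\boldsymbol{x}_r-m\,\boldsymbol{x}_c^\top\boldsymbol{y}_r^\times\boldsymbol{x}_c-m\,\boldsymbol{x}_c^\top\boldsymbol{y}_c^\times\boldsymbol{x}_r .
\]
The middle term vanishes because $\boldsymbol{x}_c^\top(\boldsymbol{y}_r\times\boldsymbol{x}_c)=0$.

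For the right-hand side, I would use $\mathring{\boldsymbol{M}}\mathring{\boldsymbol{x}}=m\boldsymbol{x}_c+\epsilon\boldsymbol{J}\boldsymbol{x}_r$, so that
\[
\mathring{\boldsymbol{y}}^\times\mathring{\boldsymbol{M}}\mathring{\boldsymbol{x}}=m\boldsymbol{y}_r^\times\boldsymbol{x}_c+\epsilon(\boldsymbol{y}_r^\times\boldsymbol{J}\boldsymbol{x}_r+m\boldsymbol{y}_c^\times\boldsymbol{x}_c).
\]
Its $\operatorname{crs}$ with $\mathring{\boldsymbol{x}}$ is
\[
\boldsymbol{x}_r^\top\boldsymbol{y}_r^\times\boldsymbol{J}\boldsymbol{x}_r+m\,\boldsymbol{x}_r^\top\boldsymbol{y}_c^\times\boldsymbol{x}_c+m\,\boldsymbol{x}_c^\top\boldsymbol{y}_r^\times\boldsymbol{x}_c .
\]
Again the last term is zero.

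It then remains to match the two surviving pairs using $\boldsymbol{a}^\top(\boldsymbol{b}\times\boldsymbol{c})=\det[\boldsymbol{a},\boldsymbol{b},\boldsymbol{c}]$ and antisymmetry of the determinant.
\begin{itemize}
\item \emph{Mass terms.} Here $-m\det[\boldsymbol{x}_c,\boldsymbol{y}_c,\boldsymbol{x}_r]=m\det[\boldsymbol{x}_r,\boldsymbol{y}_c,\boldsymbol{x}_c]$, which holds by swapping the first and third columns.
\item \emph{Inertia terms.} This is the only step that uses a physical property, namely the symmetry $\boldsymbol{J}=\boldsymbol{J}^\top$ of the inertia matrix, which I take as implicit in the model. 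It gives $\boldsymbol{x}_r^\top\boldsymbol{J}(\boldsymbol{y}_r\times\boldsymbol{x}_r)=(\boldsymbol{J}\boldsymbol{x}_r)^\top(\boldsymbol{y}_r\times\boldsymbol{x}_r)=\det[\boldsymbol{J}\boldsymbol{x}_r,\boldsymbol{y}_r,\boldsymbol{x}_r]=-\det[\boldsymbol{x}_r,\boldsymbol{y}_r,\boldsymbol{J}\boldsymbol{x}_r]$. The negative of this equals $\boldsymbol{x}_r^\top(\boldsymbol{y}_r\times\boldsymbol{J}\boldsymbol{x}_r)$, as required.
\end{itemize}

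The main obstacle is bookkeeping rather than an idea. One must apply the operator $\frac{\diff}{\diff\epsilon}$ in $\mathring{\boldsymbol{M}}$ correctly: it sends the dual part into the real slot scaled by $m$, and the real part into the dual slot through $\boldsymbol{J}$. One must also keep track of which components are paired by $\operatorname{crs}$. I would double-check this step carefully, and state the symmetry of $\boldsymbol{J}$ explicitly.
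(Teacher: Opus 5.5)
Your proposal is correct and follows essentially the same route as the paper's proof. Both expand each side componentwise, discard the vanishing term $\boldsymbol{x}_c^\top\boldsymbol{y}_r^\times\boldsymbol{x}_c$, and match the surviving mass and inertia terms using the symmetry of $\boldsymbol{J}$ and the antisymmetry of the scalar triple product; your explicit flagging of $\boldsymbol{J}=\boldsymbol{J}^\top$ as an assumption is apt, since the paper uses it without stating it in the model.
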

\begin{proof}
The proof is provided in Appendix \ref{appendix:lem5}.
\end{proof}

\begin{lem}
\label{lem6}
Let three dual quaternions be denoted by $\mathring{\boldsymbol{Q}}_1,\,\mathring{\boldsymbol{Q}}_2,\,\mathring{\boldsymbol{Q}}_3\in\mathbb{DR}^4$. Then,
\begin{align}
\label{eq30_f}
\operatorname{crs}\left(\mathring{\boldsymbol{Q}}_1\circ \mathring{\boldsymbol{Q}}_2,\mathring{\boldsymbol{Q}}_3\right)
=
\operatorname{crs}\left(\mathring{\boldsymbol{Q}}_2,\mathring{\boldsymbol{Q}}^*_1\circ \mathring{\boldsymbol{Q}}_3\right).
\end{align}
\end{lem}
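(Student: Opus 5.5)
The plan is to reduce the identity to a statement about ordinary quaternions and the Euclidean inner product on $\R^4$. The key fact is the real-quaternion adjoint identity
\begin{align*}
(\boldsymbol{A}\circ\boldsymbol{B})^\top\boldsymbol{C}=\boldsymbol{B}^\top(\boldsymbol{A}^*\circ\boldsymbol{C}),\qquad \forall\,\boldsymbol{A},\boldsymbol{B},\boldsymbol{C}\in\R^4 .
\end{align*}
To prove it, write left multiplication by $\boldsymbol{A}=[a,\boldsymbol{a}^\top]^\top$ as a matrix using \eqref{eq6_f}:
\begin{align*}
\boldsymbol{A}\circ\boldsymbol{B}=\boldsymbol{L}(\boldsymbol{A})\boldsymbol{B},\qquad \boldsymbol{L}(\boldsymbol{A})=\left[\begin{array}{cc} a & -\boldsymbol{a}^\top\\ \boldsymbol{a} & a\boldsymbol{I}_3+\boldsymbol{a}^\times\end{array}\right].
\end{align*}
Because $\boldsymbol{a}^\times$ is skew-symmetric, $\boldsymbol{L}(\boldsymbol{A})^\top=\boldsymbol{L}(\boldsymbol{A}^*)$. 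Hence $(\boldsymbol{A}\circ\boldsymbol{B})^\top\boldsymbol{C}=\boldsymbol{B}^\top\boldsymbol{L}(\boldsymbol{A})^\top\boldsymbol{C}=\boldsymbol{B}^\top(\boldsymbol{A}^*\circ\boldsymbol{C})$.

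Next, expand the dual products into real and dual parts. With $\mathring{\boldsymbol{Q}}_i=\boldsymbol{Q}_{i,r}+\epsilon\boldsymbol{Q}_{i,c}$, \eqref{eq13_f} together with $\epsilon^2=0$ gives
\begin{align*}
\mathring{\boldsymbol{Q}}_1\circ\mathring{\boldsymbol{Q}}_2=\boldsymbol{Q}_{1,r}\circ\boldsymbol{Q}_{2,r}+\epsilon\left(\boldsymbol{Q}_{1,r}\circ\boldsymbol{Q}_{2,c}+\boldsymbol{Q}_{1,c}\circ\boldsymbol{Q}_{2,r}\right).
\end{align*}
This holds because \eqref{eq13_f} is bilinear in its arguments over the commutative dual numbers. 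Likewise, by \eqref{eq12_f}, $\mathring{\boldsymbol{Q}}_1^*\circ\mathring{\boldsymbol{Q}}_3$ has real part $\boldsymbol{Q}_{1,r}^*\circ\boldsymbol{Q}_{3,r}$ and dual part $\boldsymbol{Q}_{1,r}^*\circ\boldsymbol{Q}_{3,c}+\boldsymbol{Q}_{1,c}^*\circ\boldsymbol{Q}_{3,r}$.

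Substituting into the definition of $\operatorname{crs}$ in \eqref{eq9_f}, the left-hand side of \eqref{eq30_f} is
\begin{align*}
(\boldsymbol{Q}_{1,r}\circ\boldsymbol{Q}_{2,r})^\top\boldsymbol{Q}_{3,c}+(\boldsymbol{Q}_{1,r}\circ\boldsymbol{Q}_{2,c})^\top\boldsymbol{Q}_{3,r}+(\boldsymbol{Q}_{1,c}\circ\boldsymbol{Q}_{2,r})^\top\boldsymbol{Q}_{3,r}.
\end{align*}
The right-hand side is
\begin{align*}
\boldsymbol{Q}_{2,r}^\top(\boldsymbol{Q}_{1,r}^*\circ\boldsymbol{Q}_{3,c})+\boldsymbol{Q}_{2,r}^\top(\boldsymbol{Q}_{1,c}^*\circ\boldsymbol{Q}_{3,r})+\boldsymbol{Q}_{2,c}^\top(\boldsymbol{Q}_{1,r}^*\circ\boldsymbol{Q}_{3,r}).
\end{align*}
Applying the adjoint identity to each term on the left produces the three terms on the right, so \eqref{eq30_f} follows. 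No unit-norm assumption is needed, which is consistent with the lemma allowing arbitrary elements of $\mathbb{DR}^4$.

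The only step requiring care is the transpose relation $\boldsymbol{L}(\boldsymbol{A})^\top=\boldsymbol{L}(\boldsymbol{A}^*)$. I will check it block by block: the off-diagonal blocks swap to give $\boldsymbol{a}^\top$ and $-\boldsymbol{a}$, and the lower-right block becomes $a\boldsymbol{I}_3-\boldsymbol{a}^\times$, exactly as conjugation $\boldsymbol{a}\mapsto-\boldsymbol{a}$ requires. Everything else is bookkeeping of the $\epsilon$-terms, which is where sign or term-matching slips could occur, so I will pair the three terms explicitly as above.
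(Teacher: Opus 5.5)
Your proposal is correct and takes essentially the paper's approach: the key fact in both is that the transpose of the left-multiplication matrix of a quaternion is the left-multiplication matrix of its conjugate, $\boldsymbol{L}(\boldsymbol{A})^\top=\boldsymbol{L}(\boldsymbol{A}^*)$. The only difference is bookkeeping. The paper applies this identity once to the dual-valued matrix $f(\mathring{\boldsymbol{Q}}_1)$ and reads $\operatorname{crs}$ off as the $\epsilon$-part of $\mathring{\boldsymbol{Q}}_2^\top(\mathring{\boldsymbol{Q}}_1^*\circ\mathring{\boldsymbol{Q}}_3)$, whereas you apply it to real quaternions and match the three $\epsilon$-terms explicitly, which is equivalent.
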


\begin{proof}
The proof is provided in Appendix \ref{appendix:lem6}.
\end{proof}

\begin{rem}
Lemmas \ref{lem5} and \ref{lem6} play a pivotal role in simplifying the error dynamics and facilitating the derivation of the control law. Specifically, Lemma \ref{lem5} enables the manipulation of dual vector cross products involving operator $\frac{\text{d}}{\text{d}\epsilon}$, which commonly arise in the position-attitude motion for rigid bodies. Lemma \ref{lem6}, on the other hand, provides a dual quaternion analog of the adjoint property for the cross-product operator, allowing the transfer of a dual quaternion multiplication from one argument of the $\operatorname{crs}(\cdot,\cdot)$ operator to the other via conjugation.
\end{rem}

Furthermore, one function is defined as
\begin{align}
\label{eq31_f}
\operatorname{func}\left(\mathring{\boldsymbol{x}},\mathring{\boldsymbol{y}},\mathring{\boldsymbol{z}}\right)
\triangleq{}&
\left(\left\|\boldsymbol{y}_r^\times\boldsymbol{y}_c+\boldsymbol{z}_c\right\|_2+1\right) \operatorname{sgn}\left(\boldsymbol{x}_c\right)
\nonumber
\\
&+
\epsilon
\left(\|\boldsymbol{y}_r\|_2^2+\|\boldsymbol{z}_r\|_2+1\right) \operatorname{sgn}\left(\boldsymbol{x}_r\right)
\end{align}
\noindent for any $\mathring{\boldsymbol{x}}=\boldsymbol{x}_r+\epsilon\boldsymbol{x}_c\in\mathbb{DR}^3$, $\mathring{\boldsymbol{y}}=\boldsymbol{y}_r+\epsilon\boldsymbol{y}_c\in\mathbb{DR}^3$, and $\mathring{\boldsymbol{z}}=\boldsymbol{z}_r+\epsilon\boldsymbol{z}_c\in\mathbb{DR}^3$. This function is used to simplify the subsequent controller design.

Moreover, there exists a time sequence denoted by $0=h_0<h_1<h_2<\cdots<h_{s-1}<h_s=T$ for some $s\geqslant 1$. Additionally, an index can be established using
\begin{align}
\label{eq32_f}
p(t)
\triangleq
\min
\left\{
j\in \Z_{s}: h_{j} \geqslant t
\right\}.
\end{align}
According to the definition of $p(t)$ in \eqref{eq32_f}, we formulate an innovative mechanism referred to as segment-based dynamic projection, which is represented by $\operatorname{proj}\left(\cdot\right)$ and drives the resulting outputs to comply with a prescribed dynamic bound. For any $x(t)\in\R$, $\operatorname{proj}\left(\cdot\right)$ is designed as
\begin{align}
\label{eq33_f}
&\operatorname{proj}\left(x(t)\right)
\nonumber
\\
={}&
\begin{cases}
x(t),\,\text{if } x(t)>\max_{\tau\in(h_{p(t)-1},h_{p(t)}]}{x(\tau)}-k_c\\
\max_{\tau\in(h_{p(t)-1},h_{p(t)}]}{x(\tau)}-k_c,\,\text{otherwise}
\end{cases}
\end{align}
\noindent where $k_c> 0$ is an undetermined parameter that represents the width of the dynamic projection.

Based on \eqref{eq33_f}, the controller $\mathring{\boldsymbol{f}}_k(t)$ is proposed as
\begin{align}
\label{eq34_f}
\!\mathring{\boldsymbol{f}}_k(t)
={}&
-\hat\theta_k(t)
\operatorname{func}\left(\delta\mathring{\boldsymbol{\omega}}_k(t),\operatorname{tr}
\left(\mathring{\boldsymbol{\omega}}_{d}(t)\right),\operatorname{tr}\left(\dot{\mathring{\boldsymbol{\omega}}}_{d}(t)\right)\right)
\nonumber
\\
&
-
k_d \operatorname{exch}\left(\delta\mathring{\boldsymbol{\omega}}_k(t)\right)
-\epsilon k_p\operatorname{red}\left( \operatorname{real}\left(\delta\mathring{\boldsymbol{Q}}_k(t)\right)
\right)
\nonumber
\\
&
-k_p\operatorname{red}\left(
\operatorname{real}\left(\delta\mathring{\boldsymbol{Q}}^*_k(t)\right)\circ\operatorname{comp}\left(\delta\mathring{\boldsymbol{Q}}_k(t)\right)
\right)\!,
\end{align}
\noindent where operators $\operatorname{exch}\left(\cdot\right)$, $\operatorname{tr}(\cdot)$, and $\operatorname{func}\left(\cdot\right)$ are defined in \eqref{eq9_f}, \eqref{eq26_f}, and \eqref{eq31_f}, respectively, $k_p>0$ and $k_d>0$ are the control parameters, and $\hat{\theta}_k(t)$ is the estimate of
$$
\theta(t)
\triangleq
\sup_{\tau\in(h_{p(t)-1},h_{p(t)}]}
\left\{\beta_m,\beta_J,\beta_d(\tau),\beta_s(\tau)\right\}.
$$
The corresponding iterative updating law is proposed as
\begin{align}
\label{eq35_f}
&\hat{{\theta}}_{k}(t)
-
\operatorname{proj}\left(\hat{{\theta}}_{k-1}(t)\right)
\nonumber
\\
={}&
k_\theta
\operatorname{crs}\Big(\delta \mathring{\boldsymbol{\omega}}_k(t),
\operatorname{func}\Big(\delta\mathring{\boldsymbol{\omega}}_k(t),\operatorname{tr}
\left(\mathring{\boldsymbol{\omega}}_{d}(t)\right),
\nonumber
\\
&\quad\quad\quad\quad\quad\quad\quad\quad\quad
\operatorname{tr}\left(\dot{\mathring{\boldsymbol{\omega}}}_{d}(t)\right)\!\Big)
\Big),
\end{align}
\noindent where $\operatorname{proj}\left(\cdot\right)$ is defined in \eqref{eq33_f}, $k_\theta>0$ is the learning parameter, and $\hat{{\theta}}_{0}(t)=0$ holds for all $t\in[0,T]$.

\begin{rem}
Although the controller $\mathring{\boldsymbol{f}}_k(t)$ is able to handle the uncertainties, it is worth noting that in practical engineering applications, additional prior knowledge—such as gravitational forces, Coulomb friction, or known dynamic disturbances—is often incorporated through feedforward compensation to further improve control performance. Importantly, such enhancements do not affect the theoretical validity or implementation of the proposed controller, as they can be seamlessly integrated as additive terms within the control law. The robustness and stability properties established in the analysis remain preserved, provided the feedforward components are bounded and do not introduce unmodeled dynamics.
\end{rem}

\subsection{Convergence and Boundedness Analysis}

By employing the two-loop ILC, we can achieve both perfect tracking and boundedness objectives presented in Section \ref{sec_Problem}. The detailed results are stated in the following theorem.

\begin{thm}
\label{th1}
For the position-attitude motion of rigid bodies governed by \eqref{eq17_f} and \eqref{eq21_f} under Assumptions \ref{ass1_dq} and \ref{ass2_dq}, let the dual learning control scheme consisting of \eqref{eq34_f} and \eqref{eq35_f} be applied. Then, the objectives outlined in (a1) and (a2), namely, perfect position-attitude tracking objective and the boundedness objective in $\mathscr{L}_{\mathrm{\infty e}}$-norm sense, are achieved.
\end{thm}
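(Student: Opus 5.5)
The plan is a composite energy function (CEF) argument of the standard adaptive ILC type, adapted to the dual-number setting via the $\operatorname{crs}(\cdot,\cdot)$ pairing. First, for each iteration $k$ and $t\in[0,T]$, I will form the Lyapunov-like function
\[
V_k(t)=k_p\,\operatorname{crs}\bigl(\delta\mathring{\boldsymbol{Q}}_k(t)-(\pm\boldsymbol{1}+\epsilon\boldsymbol{0}),\,\delta\mathring{\boldsymbol{Q}}_k(t)-(\pm\boldsymbol{1}+\epsilon\boldsymbol{0})\bigr)+\tfrac12\operatorname{crs}\bigl(\delta\mathring{\boldsymbol{\omega}}_k(t),\mathring{\boldsymbol{M}}\delta\mathring{\boldsymbol{\omega}}_k(t)\bigr).
\]
Its first term is $k_p(\|\delta\boldsymbol{q}_k\|^2+\ldots)$ and its second term is $\frac12(m\|\delta\boldsymbol{v}_k\|^2+\delta\boldsymbol{\omega}_k^\top\boldsymbol{J}\delta\boldsymbol{\omega}_k)$. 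I will differentiate along \eqref{eq25_f} and \eqref{eq27_f}. Lemma \ref{lem6} moves the quaternion product in the kinematic term onto $\delta\mathring{\boldsymbol{\omega}}_k$, producing exactly the $k_p\operatorname{red}(\cdot)$ terms, which cancel the two $k_p$ feedback terms in \eqref{eq34_f}. Lemma \ref{lem5} turns $-\mathring{\boldsymbol{M}}(\mathring{\boldsymbol{\omega}}_k^\times\delta\mathring{\boldsymbol{\omega}}_k)$ into a cross-product form. Writing $\mathring{\boldsymbol{\omega}}_k=\delta\mathring{\boldsymbol{\omega}}_k+\operatorname{tr}(\mathring{\boldsymbol{\omega}}_d)$, the skew-symmetric parts then vanish. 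What remains is a term linear in $\delta\mathring{\boldsymbol{\omega}}_k$ and quadratic in $\operatorname{tr}(\mathring{\boldsymbol{\omega}}_d)$, together with the $\operatorname{tr}(\dot{\mathring{\boldsymbol{\omega}}}_d)$ term and the disturbance. The $-k_d\operatorname{exch}$ term contributes $-k_d\operatorname{crs}(\delta\mathring{\boldsymbol{\omega}}_k,\operatorname{exch}(\delta\mathring{\boldsymbol{\omega}}_k))=-k_d(\|\boldsymbol{\omega}\|^2+\|\boldsymbol{v}\|^2)$ for the error components.

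The key pointwise inequality I aim for is
\[
\dot V_k\le -k_d\|\delta\mathring{\boldsymbol{\omega}}_k\|^2+\bigl(\theta(t)-\hat\theta_k(t)\bigr)\,\operatorname{crs}\bigl(\delta\mathring{\boldsymbol{\omega}}_k,\operatorname{func}(\cdot)\bigr).
\]
To get it, I will bound every uncertain term componentwise by $\theta(t)$ times the corresponding factor in \eqref{eq31_f}, using Assumption \ref{ass2_dq}, Lemma \ref{lem4} (norm preservation of $\operatorname{tr}$), and $\boldsymbol{x}^\top\operatorname{sgn}(\boldsymbol{x})=\|\boldsymbol{x}\|_1\ge\|\boldsymbol{x}\|_2$. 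The crs pairing swaps real and dual parts, so I expect $\operatorname{func}$ to be designed exactly to make this work. I consider this bookkeeping the main technical obstacle. One reason is that the $\frac{\diff}{\diff\epsilon}$ operator in $\mathring{\boldsymbol{M}}$ means the translational coupling term $m(\boldsymbol{\omega}_d^\times)^2$ must be matched with $\|\boldsymbol{y}_r\|_2^2$ and $\|\boldsymbol{y}_r^\times\boldsymbol{y}_c\|$. Another is that $\theta(t)$ must dominate $\beta_m,\beta_J,\beta_d,\beta_s$ simultaneously, which holds by its definition as a supremum.

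Next I will define the CEF
\[
E_k(t)=V_k(t)+\frac{1}{2k_\theta}\int_0^t\bigl(\theta(\tau)-\hat\theta_k(\tau)\bigr)^2\diff\tau .
\]
Using Assumption \ref{ass1_dq}, $V_k(0)=0$. The first step is boundedness of $E_0$: for $k=0$ we have $\hat\theta_0=0$, so $\dot V_0\le\theta\,\operatorname{crs}(\cdot)$, which is finite-time bounded by Gronwall-type growth, since $\operatorname{func}$ is linear-growth in the error via $\operatorname{sgn}$. The second step handles the difference $\Delta E_k=E_k-E_{k-1}$. I will use the projection property: since $\theta(t)$ is constant on each segment $(h_{j-1},h_j]$, I expect $(\theta-\operatorname{proj}(\hat\theta_{k-1}))^2\le(\theta-\hat\theta_{k-1})^2$ whenever the projection is active and $\theta$ lies above the projected level. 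This is the delicate point, because the projection bound is unknown-free; I will argue it in the case where $\max_\tau\hat\theta_{k-1}-k_c\le\theta$, and otherwise use that the max is attained so the estimate already overshoots, and then the term is still nonincreasing after finitely many iterations. With this, the standard algebra $(\theta-\hat\theta_k)^2-(\theta-\bar\theta_{k-1})^2=-(\hat\theta_k-\bar\theta_{k-1})^2-2(\hat\theta_k-\bar\theta_{k-1})(\theta-\hat\theta_k)$ together with \eqref{eq35_f} cancels the mismatch term and gives
\[
\Delta E_k(t)\le -V_{k-1}(t)-k_d\int_0^t\|\delta\mathring{\boldsymbol{\omega}}_k\|^2\diff\tau .
\]

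Finally, summing gives $E_k\le E_0-\sum_{j<k}V_j$, hence $\sum_j V_j(t)<\infty$ and $V_k(t)\to0$ uniformly on $[0,T]$. This yields $\|\delta\boldsymbol{q}_k\|\to0$ and, via the dual part, $\|\delta\boldsymbol{P}_k\|\to0$, since by Lemma \ref{lem2} the dual part is $\frac12\delta\boldsymbol{Q}_k\circ\operatorname{aug}(\delta\boldsymbol{P}_k)$ with $\delta\boldsymbol{Q}_k\in\mathbb{S}^3$. This gives (C1). For (C2), uniform boundedness of $E_k$ bounds $\delta\mathring{\boldsymbol{Q}}_k$ and $\delta\mathring{\boldsymbol{\omega}}_k$. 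The segment projection keeps $\hat\theta_k(t)$ within a window of width $k_c$ below its segment maximum, and the $L_2$ bound on $\theta-\hat\theta_k$ forbids the segment maximum from drifting unboundedly, so $\hat\theta_k$ is uniformly bounded in $\mathscr{L}_{\infty e}$. Substituting into \eqref{eq34_f} then bounds $\mathring{\boldsymbol{f}}_k$.
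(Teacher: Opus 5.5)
\textbf{Gap: your $V_k$ is not a Lyapunov function.} The kinematic part of $V_k$ fails at the foundation. Under the paper's pairing $\operatorname{crs}(\mathring{\boldsymbol{x}},\mathring{\boldsymbol{y}})=\boldsymbol{x}_r^\top\boldsymbol{y}_c+\boldsymbol{x}_c^\top\boldsymbol{y}_r$, you get $\operatorname{crs}(\mathring{\boldsymbol{x}},\mathring{\boldsymbol{x}})=2\boldsymbol{x}_r^\top\boldsymbol{x}_c$, which is a cross term, not a sum of squares. Take $\mathring{\boldsymbol{x}}=\delta\mathring{\boldsymbol{Q}}_k-\boldsymbol{1}$ and write $\delta\mathring{\boldsymbol{Q}}_k=\delta\boldsymbol{Q}_k+\epsilon\tfrac{1}{2}\delta\boldsymbol{Q}_k\circ\operatorname{aug}(\delta\boldsymbol{P}_k)$. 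Using $\delta\boldsymbol{Q}_k^\top(\delta\boldsymbol{Q}_k\circ\operatorname{aug}(\delta\boldsymbol{P}_k))=0$, your first term equals $k_p\,\delta\boldsymbol{q}_k^\top\delta\boldsymbol{P}_k$. This is indefinite, and it vanishes for a pure position error. As a result, several steps fail:
\begin{itemize}
\item $V_k\geqslant 0$ and the lower bound on the CEF;
\item the summability $\sum_j V_j(t)<\infty$;
\item the implication from $V_k\to0$ to tracking.
\end{itemize}

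\textbf{The claimed cancellation also fails.} Write $\delta\mathring{\boldsymbol{\omega}}_k=\delta\boldsymbol{\omega}_k+\epsilon\,\delta\boldsymbol{v}_k$. Lemma \ref{lem6} gives
\[
2k_p\operatorname{crs}(\delta\dot{\mathring{\boldsymbol{Q}}}_k,\delta\mathring{\boldsymbol{Q}}_k-\boldsymbol{1})=k_p\operatorname{crs}(\delta\mathring{\boldsymbol{\omega}}_k,\operatorname{red}(\delta\mathring{\boldsymbol{Q}}_k)).
\]
This pairs $\delta\boldsymbol{\omega}_k$ with $\operatorname{red}(\operatorname{comp}(\delta\mathring{\boldsymbol{Q}}_k))$ and $\delta\boldsymbol{v}_k$ with $\delta\boldsymbol{q}_k$. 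The $k_p$ terms of \eqref{eq34_f}, however, contribute $-k_p(\delta\boldsymbol{\omega}_k^\top\delta\boldsymbol{q}_k+\tfrac12\delta\boldsymbol{v}_k^\top\delta\boldsymbol{P}_k)$, so nothing cancels.

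\textbf{The fix: use $\operatorname{exch}$.} You already used this trick for the damping term. Take the kinematic part as
\[
k_p\operatorname{crs}(\delta\mathring{\boldsymbol{Q}}_k-\boldsymbol{1},\operatorname{exch}(\delta\mathring{\boldsymbol{Q}}_k-\boldsymbol{1}))=k_p(\|\delta\boldsymbol{Q}_k-\boldsymbol{1}\|_2^2+\|\operatorname{comp}(\delta\mathring{\boldsymbol{Q}}_k)\|_2^2).
\]
This is positive definite, and $\|\operatorname{comp}(\delta\mathring{\boldsymbol{Q}}_k)\|_2=\tfrac12\|\delta\boldsymbol{P}_k\|_2$ gives position convergence directly. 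Its derivative, after Lemma \ref{lem6} and the identities \eqref{eq41_f}--\eqref{eq42_f}, is exactly
\[
k_p\operatorname{crs}\bigl(\delta\mathring{\boldsymbol{\omega}}_k,\operatorname{red}(\operatorname{real}(\delta\mathring{\boldsymbol{Q}}^*_k)\circ\operatorname{comp}(\delta\mathring{\boldsymbol{Q}}_k))+\epsilon\operatorname{red}(\operatorname{real}(\delta\mathring{\boldsymbol{Q}}_k))\bigr),
\]
which is precisely what the $k_p$ terms of \eqref{eq34_f} cancel. With this repair, the rest of your plan tracks the paper's proof: the inequality \eqref{eq48_f}, the CEF with $\int\tilde\theta_k^2$, the case split on whether the estimate exceeds $\theta+k_c$ on a segment, the telescoping, and boundedness via the projection.

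\textbf{Two smaller points.}
\begin{itemize}
\item Drop the $\pm$. The sign of the $\epsilon k_p$ feedback term in \eqref{eq34_f} matches only the $+\boldsymbol{1}$ branch. With $-\boldsymbol{1}$, a residual $-2k_p\,\delta\boldsymbol{\omega}_k^\top\delta\boldsymbol{q}_k$ survives. So, like the paper, you effectively work with $\delta\mathring{\boldsymbol{Q}}_k(0)=\boldsymbol{1}$.
\item Summability gives $V_k(t)\to0$ only pointwise in $t$, not uniformly.
\end{itemize}
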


\begin{proof}
Initially, our examination focuses on the condition within $[0, h_1]$. A composite energy function is proposed as
\begin{align}
\label{eq36_f}
W_k(t)\triangleq
V_k(t)
+
\frac{1}{2 k_\theta}\int_{0}^{t}
\tilde{\theta}^2_{k}(\tau)
\diff \tau
\end{align}
\noindent with
\begin{align*}
V_k(t)
\triangleq{}&
\operatorname{crs}\left(\delta\mathring{\boldsymbol{Q}}_k(t)-\boldsymbol{1},\operatorname{exch}\left(\delta\mathring{\boldsymbol{Q}}_k(t)-\boldsymbol{1}\right)\right)
\nonumber
\\
{}&+
\frac{1}{2} \operatorname{crs}\left(\delta \mathring{\boldsymbol{\omega}}_k(t),\mathring{\boldsymbol{M}}\delta \mathring{\boldsymbol{\omega}}_k(t)\right),
\end{align*}
\noindent where $\tilde{\theta}_k(\tau)\triangleq \theta(t) - \hat{\theta}_k(\tau)$. Note that $\theta(t)$ remains constant over $[0,h_1]$ according to the definition of $\theta(t)$.

Additionally, the difference between $W_k(t)$ and $W_{k-1}(t)$ can be denoted by
\begin{align}
\label{eq37_f}
\Delta W_k(t)\triangleq W_k(t)-W_{k-1}(t).
\end{align}
\noindent Furthermore, we have
\begin{align}
\label{eq38_f}
\Delta W_k(t)
={}&
\frac{1}{2 k_\theta}\int_{0}^{t}
\left(
\tilde{\theta}^2_{k}(\tau)
-
\tilde{\theta}^2_{k-1}(\tau)
\right)
\diff \tau
\nonumber
\\
&
+
\int_{0}^{t}
\dot{V}_k(\tau)
\diff \tau
-
V_{k-1}(t),
\end{align}
\noindent where $\dot{V}_k(t)$ is the derivative of $V_k(t)$. Note that $V_k(0)$ does not occur in \eqref{eq38_f} because of Assumption \ref{ass1_dq}, which states that $V_k(0)=0,\,\forall k\in\Z_{+}$ holds.

To begin our analysis, we have the expression of $\dot{V}_k(\tau)$ as
\begin{align}
\label{eq39_f}
\dot{V}_k(\tau)
={}&
2k_p\operatorname{crs}\left(\delta\dot{\mathring{\boldsymbol{Q}}}_k(\tau),\operatorname{exch}\left(\delta\mathring{\boldsymbol{Q}}_k(\tau)-\boldsymbol{1}\right)\right)
\nonumber
\\
&
+
\operatorname{crs}\left(\delta \mathring{\boldsymbol{\omega}}_k(\tau),\mathring{\boldsymbol{M}}\delta \dot{\mathring{\boldsymbol{\omega}}}_k(\tau)\right).
\end{align}
\noindent The first term on the right-hand side of \eqref{eq39_f} is rewritten as
\begin{align*}
&2k_p\operatorname{crs}\left(\delta\dot{\mathring{\boldsymbol{Q}}}_k(\tau),\operatorname{exch}\left(\delta\mathring{\boldsymbol{Q}}_k(\tau)-\boldsymbol{1}\right)\right)
\nonumber
\\
={}&
k_p\operatorname{crs}\left(
\delta\mathring{\boldsymbol{Q}}_k(\tau)
\circ
\operatorname{aug}\left(
\delta\mathring{\boldsymbol{\omega}}_k(\tau)\right),\operatorname{exch}\left(\delta\mathring{\boldsymbol{Q}}_k(\tau)\right)\right)
\nonumber
\\
{}&-k_p
\operatorname{crs}\left(
\delta\mathring{\boldsymbol{Q}}_k(\tau)
\circ
\operatorname{aug}\left(
\delta\mathring{\boldsymbol{\omega}}_k(\tau)\right),\boldsymbol{0}+\epsilon\boldsymbol{1}\right)
\end{align*}
\noindent according to \eqref{eq25_f} and the definition of $\operatorname{exch}\left(\cdot\right)$ in \eqref{eq9_f}. By applying Lemma \ref{lem6} and considering the definition of $\operatorname{crs}\left(\cdot\right)$ in \eqref{eq9_f}, we further have
\begin{align}
\label{eq40_f}
&2k_p\operatorname{crs}\left(\delta\dot{\mathring{\boldsymbol{Q}}}_k(\tau),\operatorname{exch}\left(\delta\mathring{\boldsymbol{Q}}_k(\tau)-\boldsymbol{1}\right)\right)
\nonumber
\\
={}&
k_p\operatorname{crs}\left(
\operatorname{aug}\left(
\delta\mathring{\boldsymbol{\omega}}_k(\tau)\right),
\delta\mathring{\boldsymbol{Q}}^*_k(\tau)
\circ
\operatorname{exch}\left(\delta\mathring{\boldsymbol{Q}}_k(\tau)\right)\right)
\nonumber
\\
{}&-k_p
\operatorname{crs}\left(
\operatorname{aug}\left(
\delta\mathring{\boldsymbol{\omega}}_k(\tau)\right),
\delta\mathring{\boldsymbol{Q}}^*_k(\tau)
\circ\left(\boldsymbol{0}+\epsilon\boldsymbol{1}\right)\right)
\nonumber
\\
={}&
k_p\operatorname{crs}\left(
\delta\mathring{\boldsymbol{\omega}}_k(\tau),
\operatorname{red}\left(
\delta\mathring{\boldsymbol{Q}}^*_k(\tau)
\circ
\operatorname{exch}\left(\delta\mathring{\boldsymbol{Q}}_k(\tau)\right)
\right)
\right)
\nonumber
\\
{}&-k_p
\operatorname{crs}\left(
\delta\mathring{\boldsymbol{\omega}}_k(\tau),
\operatorname{red}\left(
\delta\mathring{\boldsymbol{Q}}^*_k(\tau)
\circ\left(\boldsymbol{0}+\epsilon\boldsymbol{1}\right)\right)
\right).
\end{align}
Based on the distributive property of the dual quaternion multiplication, we have 
\begin{align}
\label{eq41_f}
&\operatorname{red}\left(
\delta\mathring{\boldsymbol{Q}}^*_k(\tau)
\circ
\operatorname{exch}\left(\delta\mathring{\boldsymbol{Q}}_k(\tau)\right)
\right)
\nonumber
\\
={}&
\operatorname{red}
\left(
\operatorname{real}\left(
\delta\mathring{\boldsymbol{Q}}^*_k(\tau)
\right)
\circ
\operatorname{comp}\left(\delta\mathring{\boldsymbol{Q}}_k(\tau)\right)
\right)
\nonumber
\\
{}&
+\epsilon\operatorname{red}
\left(
\operatorname{real}\left(
\delta\mathring{\boldsymbol{Q}}^*_k(\tau)
\right)
\circ
\operatorname{real}\left(\delta\mathring{\boldsymbol{Q}}_k(\tau)\right)
\right)
\nonumber
\\
{}&
+\epsilon\operatorname{red}
\left(
\operatorname{comp}\left(
\delta\mathring{\boldsymbol{Q}}^*_k(\tau)
\right)
\circ
\operatorname{comp}\left(\delta\mathring{\boldsymbol{Q}}_k(\tau)\right)
\right)
\nonumber
\\
={}&
\operatorname{red}
\left(
\operatorname{real}\left(
\delta\mathring{\boldsymbol{Q}}^*_k(\tau)
\right)
\circ
\operatorname{comp}\left(\delta\mathring{\boldsymbol{Q}}_k(\tau)\right)
\right),
\end{align}
\noindent where \eqref{eq8_f} and \eqref{eq13_f} are applied. Adopting a similar analysis approach, we obtain
\begin{align}
\label{eq42_f}
\operatorname{red}\left(
\delta\mathring{\boldsymbol{Q}}^*_k(\tau)
\circ\left(\boldsymbol{0}+\epsilon\boldsymbol{1}\right)\right)
={}&
\operatorname{red}\left(
\operatorname{real}\left(
\delta\mathring{\boldsymbol{Q}}^*_k(\tau)
\right)
\circ
\left(\epsilon\boldsymbol{1}\right)\right)
\nonumber
\\
={}&
-\operatorname{red}\left(
\epsilon
\operatorname{real}\left(
\delta\mathring{\boldsymbol{Q}}_k(\tau)
\right)\right)
\end{align}
\noindent according to the definition of $\boldsymbol{1}$. Substituting \eqref{eq41_f} and \eqref{eq42_f} into \eqref{eq40_f}, it is obtained that
\begin{align}
\label{eq43_f}
&2k_p\operatorname{crs}\left(\delta\dot{\mathring{\boldsymbol{Q}}}_k(\tau),\operatorname{exch}\left(\delta\mathring{\boldsymbol{Q}}_k(\tau)-\boldsymbol{1}\right)\right)
\nonumber
\\
={}&
k_p\operatorname{crs}\Big(
\delta\mathring{\boldsymbol{\omega}}_k(\tau),\operatorname{red}\Big(
\operatorname{real}\left(\delta\mathring{\boldsymbol{Q}}^*_k(\tau)\right)\circ\operatorname{comp}\left(\delta\mathring{\boldsymbol{Q}}_k(\tau)\right)
\nonumber
\\
&\quad\quad\quad\quad\quad\quad\quad\quad\quad
+\epsilon \operatorname{real}\left(\delta\mathring{\boldsymbol{Q}}_k(\tau)\right)
\Big)\,\Big),
\end{align}
\noindent where \eqref{eq13_f} is applied. 

\par Next, let us study the second term on the right-hand side of \eqref{eq39_f} with the substitution of \eqref{eq27_f}. According to Lemma \ref{lem5}, we have
\begin{align}
\label{eq44_f}
&\operatorname{crs}\left(\delta\mathring{\boldsymbol{\omega}}_k(\tau),-\mathring{\boldsymbol{M}}
\left(\mathring{\boldsymbol{\omega}}^\times_k(\tau)
\delta{\mathring{\boldsymbol{\omega}}}_k(\tau)
\right)\right)
\nonumber
\\
=&{}
\operatorname{crs}\left(\delta\mathring{\boldsymbol{\omega}}_k(\tau),\mathring{\boldsymbol{\omega}}^\times_k(\tau) \mathring{\boldsymbol{M}}
\delta\mathring{\boldsymbol{\omega}}_k(\tau)\right).
\end{align}
Based on \eqref{eq26_f}, it is obtained that
\begin{align}
\label{eq45_f}
&\mathring{\boldsymbol{\omega}}^\times_k(\tau) \mathring{\boldsymbol{M}}
\delta\mathring{\boldsymbol{\omega}}_k(\tau)
-
\mathring{\boldsymbol{\omega}}^\times_k(\tau) \mathring{\boldsymbol{M}}\mathring{\boldsymbol{\omega}}_k(\tau)
\nonumber
\\
={}&
-\mathring{\boldsymbol{\omega}}^\times_k(\tau) \mathring{\boldsymbol{M}}
\operatorname{tr}\left(\mathring{\boldsymbol{\omega}}_{d}(\tau)\right).
\end{align}
\noindent Additionally, we have
\begin{align}
\label{eq46_f}
&
\operatorname{crs}\left(\delta\mathring{\boldsymbol{\omega}}_k(\tau),-\mathring{\boldsymbol{\omega}}^\times_k(\tau) \mathring{\boldsymbol{M}}
\operatorname{tr}
\left(\mathring{\boldsymbol{\omega}}_{d}(\tau)\right)
\right)
\nonumber
\\
=&
-\operatorname{crs}\left(\delta\mathring{\boldsymbol{\omega}}_k(\tau),\left(\operatorname{tr}
\left(\mathring{\boldsymbol{\omega}}_{d}(\tau)\right)\right)^\times\mathring{\boldsymbol{M}}
\operatorname{tr}
\left(\mathring{\boldsymbol{\omega}}_{d}(\tau)\right)\right),
\end{align}
\noindent where $\operatorname{crs}\left(\delta\mathring{\boldsymbol{\omega}}_k(\tau),-\delta\mathring{\boldsymbol{\omega}}^\times_k(\tau)\mathring{\boldsymbol{M}}
\operatorname{tr}
\left(\mathring{\boldsymbol{\omega}}_{d}(\tau)\right)\right)=0$ is applied. Substituting \eqref{eq43_f}-\eqref{eq46_f} into \eqref{eq39_f}, we have
\begin{align}
\label{eq47_f}
{}&\dot{V}_k(\tau)
\nonumber
\\
={}&-\operatorname{crs}\left(\delta\mathring{\boldsymbol{\omega}}_k(\tau),\left(\operatorname{tr}
\left(\mathring{\boldsymbol{\omega}}_{d}(\tau)\right)\right)^\times\mathring{\boldsymbol{M}}
\operatorname{tr}
\left(\mathring{\boldsymbol{\omega}}_{d}(\tau)\right)\right)
\nonumber
\\
&-\operatorname{crs}\left(\delta\mathring{\boldsymbol{\omega}}_k(\tau),\mathring{\boldsymbol{M}}
\operatorname{tr}\left(\dot{\mathring{\boldsymbol{\omega}}}_{d}(\tau)\right)\right)
\nonumber
\\
&
+
\operatorname{crs}\left(\delta\mathring{\boldsymbol{\omega}}_k(\tau),\mathring{\boldsymbol{d}}_k(\tau)\right)
+
\operatorname{crs}\left(\delta\mathring{\boldsymbol{\omega}}_k(\tau),\mathring{\boldsymbol{f}}_k(\tau)\right)
\nonumber
\\
{}&+k_p
\operatorname{crs}\Big(
\delta\mathring{\boldsymbol{\omega}}_k(\tau),\operatorname{red}\Big(
\operatorname{real}\left(\delta\mathring{\boldsymbol{Q}}^*_k(\tau)\right)\circ\operatorname{comp}\left(\delta\mathring{\boldsymbol{Q}}_k(\tau)\right)
\nonumber
\\
&\quad\quad\quad\quad\quad\quad\quad\quad\quad
+\epsilon \operatorname{real}\left(\delta\mathring{\boldsymbol{Q}}_k(\tau)\right)
\Big)\Big).
\end{align}
Considering the definition of $\theta(t)$ and substituting \eqref{eq34_f} into \eqref{eq47_f}, we have, for all $k\in\Z_{+}$,
\begin{align}
\label{eq48_f}
\dot{V}_k(\tau)\leqslant{}&
-
\operatorname{crs}\left(\delta\mathring{\boldsymbol{\omega}}_k(\tau),k_d\operatorname{exch}\left(\delta\mathring{\boldsymbol{\omega}}_k(\tau)\right)\right)
\nonumber
\\
&
+
\tilde\theta_k(\tau)\operatorname{crs}\Big(\delta\mathring{\boldsymbol{\omega}}_k(\tau),
\operatorname{func}\Big(\delta\mathring{\boldsymbol{\omega}}_k(\tau),\operatorname{tr}
\left(\mathring{\boldsymbol{\omega}}_{d}(\tau)\right),
\nonumber
\\
&\quad\quad\quad\quad\quad\quad\quad\quad\quad\quad\quad~~
\operatorname{tr}\left(\dot{\mathring{\boldsymbol{\omega}}}_{d}(\tau)\right)\Big)\Big).
\end{align}

Drawing on the previous findings, we conduct the subsequent analysis.

\textbf{\textit{Step 1:}} Note the fact that $\operatorname{proj}\left(\hat{{\theta}}_{0}(t)\right)=0,\,\forall t\in[0,h_1]$ according to \eqref{eq33_f}, it is not hard to verify 
\begin{align}
\label{eq49_f}
W_0(t)\leqslant \beta_{W_0},\,\forall t\in[0,h_1],
\end{align}
\noindent where $\beta_{W_0}<\infty$
based on \eqref{eq36_f}, \eqref{eq48_f}, and Assumption \ref{ass2_dq}. The detailed derivation of \eqref{eq49_f} is omitted.

\textbf{\textit{Step 2:}} Let us discuss the evolution of $V_k(t)$ along the iteration axis, which can be divided into two cases.

\textit{Case 1:} If $\hat{\theta}_{k_0}(t)\leqslant \theta(t)+k_c,\,\forall t\in[0,h_1]$ holds for some ${k_0}<\infty$, where $k_c$ is defined in \eqref{eq33_f}, it is obtained that
$\max_{t\in[0,h_1]}{\hat{{\theta}}_{k_0}(t)}-k_c\leqslant \theta(t)$. Taking into account the definition of $\operatorname{proj}\left(\cdot\right)$, we can derive $\left|\theta(t)-\operatorname{proj}\left(\hat{{\theta}}_{k_0}(t)\right)\right|
\leqslant
\left|\theta(t)-\hat{{\theta}}_{k_0}(t)\right|$. By utilizing \eqref{eq35_f}, we further have 
\begin{align}
\label{eq50_f}
\left|\theta(t)-\operatorname{proj}\left(\hat{{\theta}}_{k}(t)\right)\right|
\leqslant
\left|\theta(t)-\hat{{\theta}}_{k}(t)\right|,\,\forall k\leqslant k_0.
\end{align}
\noindent Therefore, it is obtained that, for all $k\leqslant k_0+1$,
\begin{align}
\label{eq51_f}
&
\frac{1}{2k_\theta}\left(
\tilde{\theta}^2_{k}(\tau)
-\tilde{\theta}^2_{k-1}(\tau)
\right)
\nonumber
\\
\leqslant\,&
\frac{1}{2k_\theta}\left(
\tilde{\theta}^2_{k}(\tau)
-
\left(\theta(t)-\operatorname{proj}\left(\hat{{\theta}}_{k-1}(t)\right)\right)^2
\right)
\nonumber
\\
\leqslant\,&
-\frac{1}{k_\theta}
\left(\hat{\theta}_k(\tau) - \operatorname{proj}\left(\hat{{\theta}}_{k-1}(t)\right)\right)\tilde{\theta}_k(\tau),
\end{align}
\noindent where \eqref{eq35_f} is once again employed. Substituting \eqref{eq48_f} and \eqref{eq51_f} into \eqref{eq38_f} leads to
\begin{align}
\label{eq52_f}
\Delta W_k(t)
\leqslant{}&
-\int_{0}^{t}
\operatorname{crs}\left(\delta\mathring{\boldsymbol{\omega}}_k(\tau),k_d\operatorname{exch}\left(\delta\mathring{\boldsymbol{\omega}}_k(\tau)\right)\right)
\diff \tau
\nonumber
\\
{}&
-V_{k-1}(t)
,\,\forall k\leqslant k_0+1,\,\forall t\in[0,h_1],
\end{align}
\noindent which implies that $\Delta W_k(t)
\leqslant 0,\,\forall k\leqslant k_0+1$. Considering both \eqref{eq49_f} and \eqref{eq52_f}, we obtain
$
W_k(t)\leqslant \beta_{W_0},\,\forall k\leqslant k_0+1,\,\forall t\in[0,h_1]
$, which implies the boundedness of the real and imaginary parts of $\delta\mathring{\boldsymbol{Q}}_k(t)$ and $\delta \mathring{\boldsymbol{\omega}}_k(t)$ for all $k\leqslant k_0+1$ and all $t\in[0,h_1]$. This fact further shows
\begin{align}
\label{eq53_f}
\hat{\theta}_{k_0+1}(t)<\infty,\,\forall t\in[0,h_1]
\end{align} according to \eqref{eq35_f} and $\hat{\theta}_{k_0}(t)\leqslant \theta(t)+k_c,\,\forall t\in[0,h_1]$.

\textit{Case 2:} If $\hat{\theta}_{k_1}(t)> \theta(t)+k_c,\,\exists t\in[0,h_1]$ holds for some ${k_1}<\infty$, it is not hard to derive that $\tilde{{\theta}}_{k}(t)\leqslant 0,\,\forall k\geqslant {k_1+1},\,\forall t\in[0,h_1]$ holds according to \eqref{eq35_f}. As a result, $\dot{V}_k(t)\leqslant 0,\,\forall k\geqslant k_1+1,\,\forall t\in[0,h_1]$ holds based on \eqref{eq48_f}, which, together with Assumption \ref{ass1_dq}, implies that perfect tracking is achieved, namely, $\delta\mathring{\boldsymbol{Q}}_k(t)= \boldsymbol{1}+\epsilon \boldsymbol{0}$ and $\delta\mathring{\boldsymbol{\omega}}_k(t)= \boldsymbol{0}+\epsilon \boldsymbol{0}$ for all $k\geqslant k_1+1$ and all $t\in[0,h_1]$. Additionally, $\hat{{\theta}}_{k}(t)
=
\operatorname{proj}\left(\hat{{\theta}}_{k_1}(t)\right),\,\forall k\geqslant k_1+1,\,\forall t\in[0,h_1]$ holds according to \eqref{eq35_f}.

\textbf{\textit{Step 3:}} We establish the convergence and boundedness of both the tracking errors and estimates. Much like in Step 2, we also divide the discussion into two cases.

\textit{Case 1:} If $\hat{\theta}_{k}(t)\leqslant \theta(t)+k_c,\,\forall k\in\Z_{+},\,\forall t\in[0,h_1]$ holds, then \eqref{eq52_f} holds for all $k\in\Z_{+}$ and all $t\in[0,h_1]$. This fact, along with \eqref{eq37_f} and the positiveness of $V_k(t)$, implies
\begin{align}
\label{eq54_f}
\sum_{k=0}^{\infty} V_k(t)<\infty,\,\forall t\in[0,h_1]
\end{align}
and
\begin{align}
\label{eq55_f}
\lim\limits_{k\rightarrow \infty} V_k(t)=0,\,\forall t\in[0,h_1].
\end{align}

\textit{Case 2:} If $\hat{\theta}_{k}(t)\leqslant \theta(t)+k_c,\,\forall k\in\Z_{+},\,\forall t\in[0,h_1]$ is not satisfied, then \eqref{eq54_f} and \eqref{eq55_f} are also obtained according to the analysis in Step 2. Additionally, based on \eqref{eq53_f} and case 2 in Step 2, $\hat{\theta}$ is also bounded. 

Based on the above discussion, over $[0,h_1]$, perfect tracking is achieved, and the uniform boundedness of the real and dual parts of $\mathring{\boldsymbol{f}}_k(t)$, $\delta\mathring{\boldsymbol{Q}}_k(t)$, and $\delta\mathring{\boldsymbol{\omega}}_k(t)$ can be ensured in the $\mathscr{L}_{\mathrm{\infty e}}$-norm sense, which can not be achieved by adaptive ILC.

Subsequently, one can similarly and readily complete the analysis of perfect tracking performance and the uniform boundedness for the entire interval $(h_1,h_2]$ using \eqref{eq54_f}, even though $V_k(h_1)$ may not always be zero. Thus, we are able to obtain perfect tracking over $[0,T]$ and the $\mathscr{L}_{\mathrm{\infty e}}$-norm boundedness of all states, inputs, and estimates. The proof of Theorem \ref{th1} is concluded.
\end{proof}

\begin{rem}
In fact, $\mathring{\boldsymbol{f}}_k(t)+\mathring{\boldsymbol{d}}_k(t)$ encompasses all the forces and torques experienced by a rigid body. Hence, even though we haven't explicitly accounted for gravity in our previous analysis and design, it can be inferred that gravity has been pre-compensated nominally, and the remaining error term is encompassed within $\mathring{\boldsymbol{d}}_k(t)$.
\end{rem}

To mitigate potential overestimation in practical implementations, a saturation bound $k_l > 0$ can be incorporated into the iterative updating law \eqref{eq35_f}, yielding the modified scheme
\begin{align}
\label{eq56_f_add1}
&\hat{{\theta}}_{k}(t)
-
\operatorname{proj}\left(\hat{{\theta}}_{k-1}(t)\right)
\nonumber
\\
={}&
\min\Big\{k_l,
k_\theta
\operatorname{crs}\Big(\delta \mathring{\boldsymbol{\omega}}_k(t),
\operatorname{func}\Big(\delta\mathring{\boldsymbol{\omega}}_k(t),\operatorname{tr}
\left(\mathring{\boldsymbol{\omega}}_{d}(t)\right),
\nonumber
\\
&\quad\quad\quad\quad\quad\quad\quad\quad\quad\quad\quad\quad\quad
\operatorname{tr}\left(\dot{\mathring{\boldsymbol{\omega}}}_{d}(t)\right)\Big)
\Big)
\Big\},
\end{align}
where all variables follow the definitions previously given. The corresponding convergence property is formally stated in the following theorem.

\begin{thm}
\label{thm2}
For the position-attitude motion of rigid bodies governed by \eqref{eq17_f} and \eqref{eq21_f} under Assumptions \ref{ass1_dq} and \ref{ass2_dq}, let the dual learning control scheme consisting of \eqref{eq34_f} and \eqref{eq56_f_add1} be applied. Then, the objectives outlined in (a1) and (a2), namely, perfect position-attitude tracking objective and the boundedness objective in $\mathscr{L}_{\mathrm{\infty e}}$-norm sense, are achieved.
\end{thm}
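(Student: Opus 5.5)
The plan is to reuse the proof of Theorem \ref{th1} almost verbatim and isolate the one place where the saturated law \eqref{eq56_f_add1} differs from \eqref{eq35_f}. Denote by $\Gamma_k(t)\triangleq k_\theta\operatorname{crs}\big(\delta\mathring{\boldsymbol{\omega}}_k(t),\operatorname{func}(\cdot)\big)$ the unsaturated increment. Then \eqref{eq56_f_add1} reads $\hat\theta_k-\operatorname{proj}(\hat\theta_{k-1})=\min\{k_l,\Gamma_k\}$.

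Nothing on the plant side changes. The composite energy function \eqref{eq36_f}, the expression \eqref{eq38_f} for $\Delta W_k$, and the derivative bound \eqref{eq48_f} for $\dot V_k$ depend only on the control law \eqref{eq34_f}, so they carry over unchanged. The only modification is in the key inequality \eqref{eq51_f}. There we need
\begin{align*}
\frac{1}{2k_\theta}\Big(\tilde\theta_k^2-\big(\theta-\operatorname{proj}(\hat\theta_{k-1})\big)^2\Big)\leqslant-\frac{1}{k_\theta}\Gamma_k\tilde\theta_k .
\end{align*}
Write $a=\theta-\operatorname{proj}(\hat\theta_{k-1})$ and $u=\min\{k_l,\Gamma_k\}$, so that $\tilde\theta_k=a-u$. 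The left side equals $\frac{1}{k_\theta}\big(-u(a-u)-\tfrac{u^2}{2}\big)$, and the claim reduces to checking the sign of $(\Gamma_k-u)(a-u)$. There are two cases:
\begin{itemize}
\item If $\Gamma_k\leqslant k_l$, the saturation is inactive, $u=\Gamma_k$, and the computation is identical to \eqref{eq51_f}.
\item If $\Gamma_k>k_l$, then $u=k_l$ and $\Gamma_k-u>0$. We therefore need $\tilde\theta_k=a-k_l\geqslant 0$, or else a separate argument for the case $\tilde\theta_k<0$.
\end{itemize}

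The subcase with $\tilde\theta_k<0$ is the obstacle. Here $\hat\theta_k>\theta$, meaning the estimate already overestimates the true bound. My plan is to avoid forcing \eqref{eq51_f} and instead use \eqref{eq48_f} directly. When $\tilde\theta_k\leqslant 0$, the term $\tilde\theta_k\Gamma_k/k_\theta$ is nonpositive, because $\Gamma_k\geqslant0$: indeed $\operatorname{crs}(\mathring{\boldsymbol x},\operatorname{func})$ is a sum of terms of the form $|x_i|$ times positive weights. Hence $\dot V_k\leqslant -k_d\operatorname{crs}(\delta\mathring{\boldsymbol\omega}_k,\operatorname{exch}(\delta\mathring{\boldsymbol\omega}_k))$ pointwise on that set.

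This is exactly the mechanism of Case 2 in Steps 2--3 of Theorem \ref{th1}. The same observation, $\Gamma_k\geqslant0$, also shows that each increment in \eqref{eq56_f_add1} lies in $[0,k_l]$. As a consequence, $\operatorname{proj}(\hat\theta_{k-1})\leqslant\hat\theta_{k-1}$ only ever lowers the estimate, by at most $k_c$ below the segment maximum. So once $\hat\theta$ exceeds $\theta+k_c$ somewhere on a segment, the projected values stay at least at the level $\theta$, which keeps $\tilde\theta_k\leqslant0$ thereafter.

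With this in hand I would redo the two-case structure of Steps 2 and 3 on $[0,h_1]$:
\begin{itemize}
\item \textbf{Case 1:} $\hat\theta_k\leqslant\theta+k_c$ throughout. Then \eqref{eq52_f} holds with the modified \eqref{eq51_f}, which gives \eqref{eq54_f} and \eqref{eq55_f}.
\item \textbf{Case 2:} the threshold is crossed at some $k_1$. Then $\tilde\theta_k\leqslant0$ for $k\geqslant k_1+1$, which gives $\dot V_k\leqslant0$ and $V_k\equiv0$ by Assumption \ref{ass1_dq}.
\end{itemize}

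Boundedness is in fact easier than in Theorem \ref{th1}. Each update adds at most $k_l$, and projection caps the drift relative to the segment maximum. Combined with Case 2, this gives a uniform bound on $\hat\theta_k$, and hence, through \eqref{eq34_f} and the bounded $W_k$, on $\mathring{\boldsymbol f}_k$. Finally, I would propagate segment by segment over $(h_{j-1},h_j]$ exactly as at the end of the proof of Theorem \ref{th1}, using \eqref{eq54_f} to handle the nonzero initial values $V_k(h_{j-1})$.
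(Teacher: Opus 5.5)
\textbf{There is a genuine gap: a sign error in your case analysis of \eqref{eq51_f}.} Your reduction is the right computation, but you read off the wrong sign. With $a$, $u$ and $\tilde\theta_k=a-u$ as you define them, the modified \eqref{eq51_f} is equivalent to $(\Gamma_k-u)\tilde\theta_k\leqslant u^2/2$.

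When $\Gamma_k>k_l$ we have $\Gamma_k-u>0$, so the inequality holds automatically if $\tilde\theta_k\leqslant 0$. It fails when $\tilde\theta_k>0$, i.e.\ when the estimate is still \emph{below} $\theta$ and the update is clipped at $k_l$. In that situation, after adding \eqref{eq48_f}, the integrand of $\Delta W_k$ carries the uncancelled residual $\frac{1}{k_\theta}\big((\Gamma_k-k_l)\tilde\theta_k-\tfrac{k_l^2}{2}\big)$. This is positive for large $\Gamma_k$ (e.g.\ $a=1$, $k_l=0.1$, $\Gamma_k=10$).

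So the subcase you call the obstacle is harmless. The one you wave through is exactly where $W_k$ can increase, which is what the paper means when it says the non-increasing property is lost because of $k_l$. Consequently your Case 1 invokes \eqref{eq52_f} precisely on the set where it fails, and \eqref{eq54_f}--\eqref{eq55_f} are not established.

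A secondary error concerns the projection. By \eqref{eq33_f}, $\operatorname{proj}(x(t))=\max\{x(t),\bar x-k_c\}\geqslant x(t)$, where $\bar x$ is the segment maximum. So projection \emph{raises} values; it does not lower them. Your Case 2 conclusion ($\tilde\theta_k\leqslant 0$ after the crossing) is correct, but only because projection lifts every point of the segment to at least $\bar x-k_c>\theta$. That is the opposite of the mechanism you state.

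\textbf{The missing idea} is the one the paper's proof relies on: at each $t$ the saturation can be active only finitely often. You already have the ingredients.
\begin{enumerate}
\item Since $u_k\in[0,k_l]$ and $\operatorname{proj}(x)\geqslant x$, you get $\hat\theta_k(t)\geqslant\hat\theta_{k-1}(t)+u_k(t)$. Hence $\hat\theta_k(t)$ is nondecreasing in $k$, and it rises by exactly $k_l$ at every iteration with $\Gamma_k(t)>k_l$.
\item In Case 1, $\hat\theta_k(t)\leqslant\theta+k_c$, so at each $t$ there are at most $(\theta+k_c)/k_l$ saturated iterations. In Case 2, only finitely many iterations precede the crossing.
\item Since $|\tilde\theta_k|\leqslant\theta+k_c$, inequality \eqref{eq48_f} with Young's inequality and Gronwall gives a $k$-uniform bound on $V_k$ over $[0,h_1]$. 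This bounds $\Gamma_k$, and hence the residual.
\item Summing $\Delta W_k(t)\leqslant -V_{k-1}(t)+R_k(t)$ over $k$, where $R_k$ is the integrated positive residual, gives $\sum_k R_k(t)<\infty$. Therefore $\sum_k V_k(t)<\infty$ and $V_k(t)\to 0$ still hold.
\end{enumerate}

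With this repair your outline goes through. That includes your correct observation that saturation makes boundedness easier than in Theorem~\ref{th1}: indeed $\hat\theta_k\leqslant\theta+k_c+k_l$.
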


\begin{proof}
The proof of Theorem \ref{thm2} follows a similar line to that of Theorem \ref{th1}. It is worth mentioning that the non-increasing property of the composite energy function is not always maintained due to $k_l$. However, this doesn't significantly complicate the analysis. This is because $k_l<
k_\theta
\operatorname{crs}\left(\delta \mathring{\boldsymbol{\omega}}_k(t),
\operatorname{func}\left(\delta\mathring{\boldsymbol{\omega}}_k(t),\operatorname{tr}
\left(\mathring{\boldsymbol{\omega}}_{d}(t)\right),\operatorname{tr}\left(\dot{\mathring{\boldsymbol{\omega}}}_{d}(t)\right)\right)
\right)$ holds for only a finite number of iterations for some $t$. Otherwise, it would contradict the boundedness property of the estimates ensured by the dynamic projection mechanism. Therefore, we omit further analysis in this regard.
\end{proof}

\begin{rem}
Despite the dependence of the proposed adaptive ILC on a special imaginary unit $\epsilon$ for design and analysis, the calculated force and torque commands are transmitted to the actuator without involving any dual numbers. Therefore, in the controller settings of practical applications or numerical simulations, the complex numbers can be avoided, making it easier for practitioners to operate the proposed adaptive ILC.
\end{rem}

\begin{figure*}[!htbp]
    \centering
    \begin{minipage}[b]{0.48\textwidth}
        \centering
        \includegraphics[width=\linewidth]{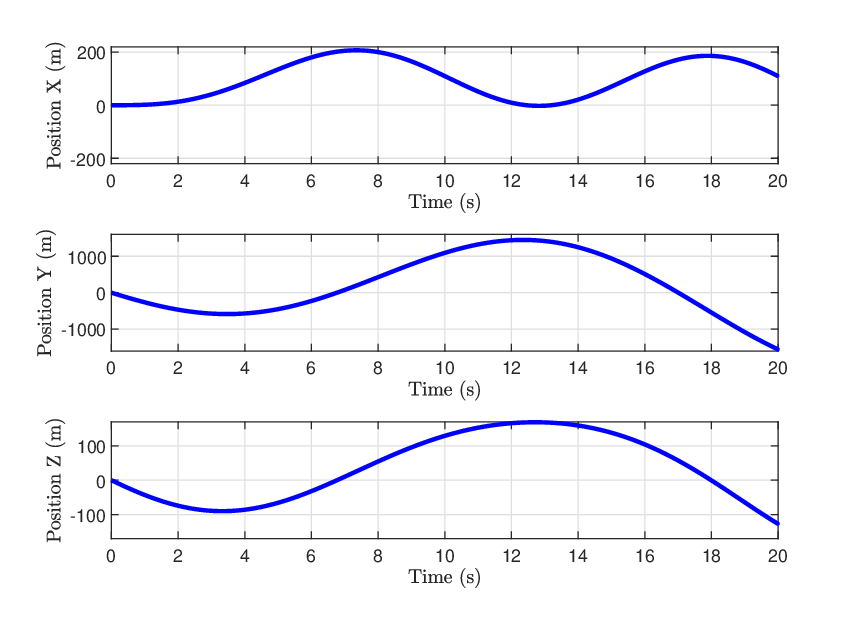}
        \caption{Position errors along the time axis for $k=0$.}
        \label{fig5}
    \end{minipage}
    \hfill
    \begin{minipage}[b]{0.48\textwidth}
        \centering
        \includegraphics[width=\linewidth]{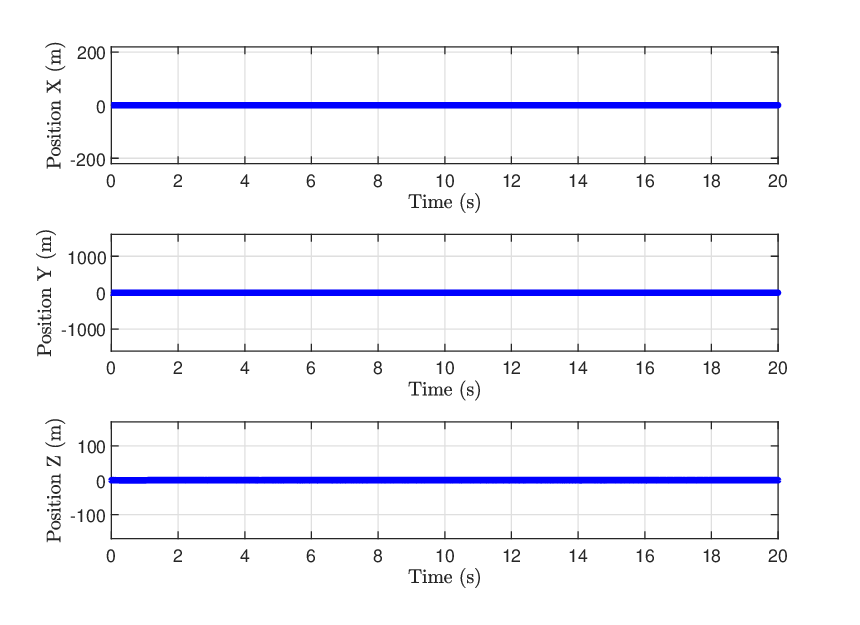}
        \caption{Position errors along the time axis for $k=30$.}
        \label{fig6}
    \end{minipage}
    
    \vspace{1em} 
    
    \begin{minipage}[b]{0.48\textwidth}
        \centering
        \includegraphics[width=\linewidth]{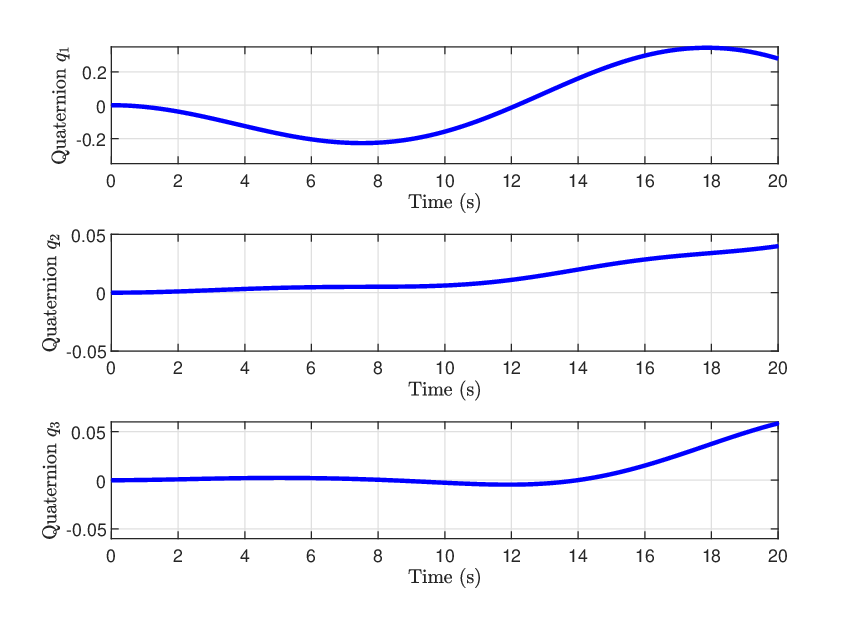}
        \caption{Attitude errors along the time axis for $k=0$.}
        \label{fig7}
    \end{minipage}
    \hfill
    \begin{minipage}[b]{0.48\textwidth}
        \centering
        \includegraphics[width=\linewidth]{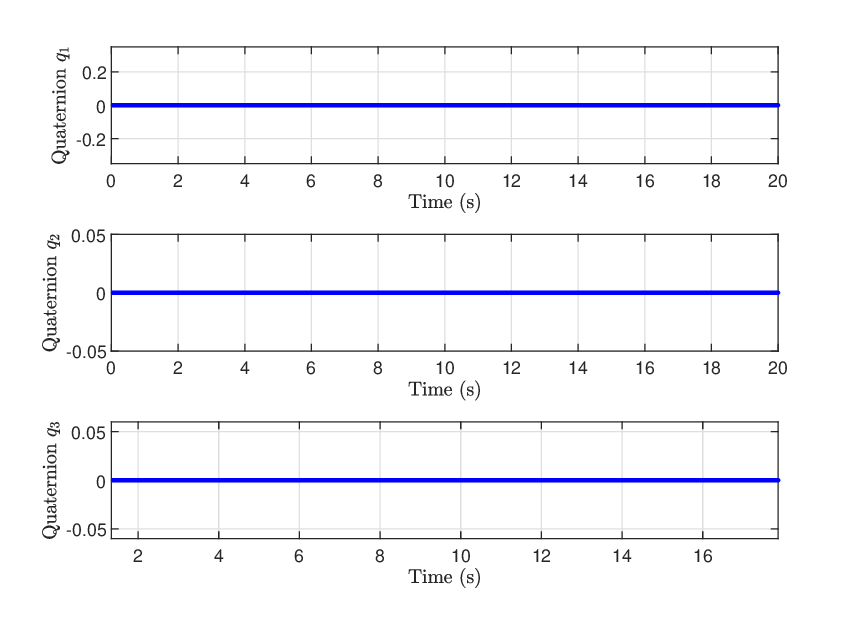}
        \caption{Attitude errors along the time axis for $k=30$.}
        \label{fig8}
    \end{minipage}
\end{figure*}

\section{Simulation Verification}\label{sec_Sim}

In this section, we conduct the numerical simulation to evaluate the effectiveness of the proposed two-loop adaptive ILC for precise six-degree-of-freedom trajectory tracking in the proximity operations of rigid satellites. The simulations are designed under realistic conditions, incorporating predefined position and attitude trajectories alongside various uncertainties and disturbances.

A rigid satellite is utilized to repetitively track a predefined six-degree-of-freedom trajectory. We apply the position-attitude trajectory described by $\mathring{\boldsymbol{Q}}_d(0)$ and $\mathring{\boldsymbol{\omega}}_d(t)$, that is,
\begin{align*}
\mathring{\boldsymbol{Q}}_d(0)\!=\!\underbrace{\left[\begin{array}{r}
0.7055\\
0.0471\\
-0.7055\\
-0.0471
\end{array}\right]}_{\text{unit: }\mathrm{1}}
\!+
\epsilon
\frac{1}{2}
\underbrace{\left[\begin{array}{r}
0.7055\\
0.0471\\
-0.7055\\
-0.0471
\end{array}\right]}_{\text{unit: }\mathrm{1}}
\circ
\underbrace{\left[\begin{array}{r}
0\\
0\\
0\\
-6778200
\end{array}\right]}_{\text{unit: }\mathrm{m}}
\end{align*}
\noindent and
\begin{align*}
	\mathring{\boldsymbol{\omega}}_d(t)\!=\!
	\underbrace{\left[\begin{array}{r}
			\frac{\pi}{8} \cdot \frac{\pi}{10}\sin(\frac{\pi}{10}t)                  \\
			-\omega^{\prime}\cos(-\frac{\pi}{8}\cos(\frac{\pi}{10}t)+\frac{\pi}{8}) \\
			\omega^{\prime}\sin(-\frac{\pi}{8}\cos(\frac{\pi}{10}t)+\frac{\pi}{8})
		\end{array}\right]}_{\text{unit: } \mathrm{rad/s}}
		\!+
		\epsilon
		\underbrace{\left[\begin{array}{r}
					7668.5229 \\
					0 \\
					0
				\end{array}\right]}_{\text{unit: }\mathrm{m/s}}\!,
\end{align*}
\noindent where $\omega^{\prime}=0.0011~ \mathrm{rad} / \mathrm{s}$. The operation length is $T=20~ \mathrm{s}$, which is divided into $s=2$ sequences, the real and nominal mass are $m=19~\mathrm{kg}$ and $m_n=20~\mathrm{kg}$, the real and nominal inertial matrix are \begin{align*}
\boldsymbol{J} = \left[\begin{array}{rrr}
			12 & 1  & 1  \\
			1  & 10 & 2  \\
			1  & 2  & 10
		\end{array}\right] ~{\mathrm{kg} \cdot \mathrm{m}^{2}}
\end{align*}
and
\begin{align*}
\boldsymbol{J}_n = \left[\begin{array}{rrr}
			20 & 2  & 1  \\
			2  & 15 & 3  \\
			1  & 3  & 15
		\end{array}\right] ~{\mathrm{kg} \cdot \mathrm{m}^{2}}.
\end{align*}
\noindent According to Assumption \ref{ass1_dq}, $\mathring{\boldsymbol{Q}}_k(0)=\mathring{\boldsymbol{Q}}_d(0)$ and $\mathring{\boldsymbol{\omega}}_k(0)=\mathring{\boldsymbol{\omega}}_d(0)$ are given for all $k\in\Z_{+}$. Additionally, the disturbance torque $\boldsymbol{s}_k(t)$ is supposed to exhibit a sinusoidal waveform characterized by periods $400$, $500$, and $700$ $\mathrm{s}$ and magnitudes of $0.1$, $0.05$, and $0.08$ $\mathrm{Nm}$, respectively. The disturbance force $\boldsymbol{d}_k(t)$ contains $\boldsymbol{d}^{'}_k(t)$ and $\boldsymbol{d}^{''}_k(t)$. Suppose that $\boldsymbol{d}^{'}_k(t)$ adheres to a comparable pattern, featuring periods of $100$, $200$, and $300$ $\mathrm{s}$, as well as magnitudes of $0.5$, $0.5$, and $0.5$ $\mathrm{N}$, respectively. The initial phases of $\boldsymbol{d}^{'}_k(t)$ are iteration-dependent and fall within the range of $[0, 0.1\pi]$. Additionally, $\boldsymbol{d}^{''}_k(t)$ is caused by universal gravitation.

The control parameters are set as $k_p=1$, $k_d=1$, and $k_c=0.01$, with $k_\theta=0.002$ for Theorem \ref{th1} and $k_l=0.02$ for Theorem \ref{thm2}. Both the simulation and control frequencies are fixed at $1000~\mathrm{Hz}$. The total number of iterations is preset to $30$, with the number of segments $s=200$. Note that in numerical simulations or practical applications, it is important to ensure that parameters are positive and not overly large to adhere to the limits set by the control frequency. Due to the similarity between Theorem \ref{th1} and Theorem \ref{thm2}, except for the convergence rate, only the results corresponding to Theorem \ref{thm2} are presented below.

The simulation results are illustrated in Figs. \ref{fig5}--\ref{fig4}. The position tracking errors at iteration $k=0$ and $k=30$ are respectively displayed in Figs. \ref{fig5} and \ref{fig6}, and the attitude tracking errors at iteration $k=0$ and $k=30$ are respectively displayed in Figs. \ref{fig7} and \ref{fig8}. Specifically, the evolution of the position/attitude tracking errors along the iteration axis is shown in Fig. \ref{fig3} and \ref{fig4}. At $k=0$, $\theta_{0}(t)=0,\forall t\in[0,T]$, indicating that the controller is of the PD type. If the controller parameters are poorly chosen, the maximum value of $\|\delta \boldsymbol{P}_k(t)\|_2$ for all $t\in[0,T]$ exceeds $1582~\mathrm{m}$, as shown in Fig. \ref{fig5}. As the number of iterations increases, the position tracking errors tend to zero, with the maximum value of $\|\delta \boldsymbol{P}_k(t)\|_2$ over the entire interval reducing to approximately $33~\mathrm{m}$. In Fig. \ref{fig4}, the evolution of attitude tracking errors exhibits a pattern reminiscent of the trends observed in position tracking errors in Fig. \ref{fig3}. Based on the Euler angle definition in \eqref{eq7_f_add1}, the maximum error Euler angle is found to be $42.3~\mathrm{deg}$ at iteration $k=0$ in \ref{fig7}, and $0.1~\mathrm{deg}$ at iteration $k=30$ in \ref{fig8}. These facts demonstrate a substantial improvement in position-attitude coupling tracking performance the entire $20~\mathrm{s}$ interval. Additionally, the maximum value of the estimate is $0.4981$ for $k=30$, and actually it remains almost unchanged starting from $k=25$. This fact shows the boundedness of estimates and inputs. 

In summary, the proposed ILC framework and the dynamic projection mechanism prove to be effective, successfully achieving the two stated objectives (a1) and (a2).

\section{Conclusions}
\label{sec_Conclusion}

In this paper, we have explored a segment-based two-loop adaptive ILC framework tailored for close-range manipulation of rigid bodies, specifically addressing the challenge of repetitively tracking six-degree-of-freedom position and attitude trajectories under time-iteration-dependent uncertainties. The resolution of this problem has been achieved through the utilization of dual numbers, specifically referred to as complex numbers of parabolic type, which enable the establishment of a two-loop adaptive ILC framework. Within this framework, a segment-based dynamic projection mechanism has been developed as a natural extension to guarantee the boundedness of estimates and control inputs. Comprehensive analyzes and simulations have demonstrated that our approach effectively manages various uncertainties and interactions between position and attitude, ensuring robust performance. The findings have confirmed the efficacy of the proposed control scheme in achieving the high-precision attitude-position tracking for complex manipulation tasks.

\appendices
\section{Proof of Lemma \ref{lem1}}
\label{appendix:lem1}
\begin{proof}
Using \eqref{eq13_f} and noticing ${\mathring{\boldsymbol{Q}}}_1,\,{\mathring{\boldsymbol{Q}}}_2\in\mathbb{DS}^3$, we have
\begin{align*}
{\mathring{\boldsymbol{Q}}}^*_1\circ{\mathring{\boldsymbol{Q}}}_2
\circ ({\mathring{\boldsymbol{Q}}}^*_1\circ{\mathring{\boldsymbol{Q}}}_2)^*
=
{\mathring{\boldsymbol{Q}}}^*_1\circ{\mathring{\boldsymbol{Q}}}_2
\circ
\mathring{\boldsymbol{Q}}^{*}_2
\circ
\mathring{\boldsymbol{Q}}_1
=
\boldsymbol{1}+\epsilon \boldsymbol{0},
\end{align*}
\noindent which confirms ${\mathring{\boldsymbol{Q}}}^*_1\circ{\mathring{\boldsymbol{Q}}}_2\in\mathbb{DS}^3$, as required.

\begin{figure*}[!htbp]
    \centering
    \begin{minipage}[b]{0.48\textwidth}
        \centering
        \includegraphics[width=\linewidth]{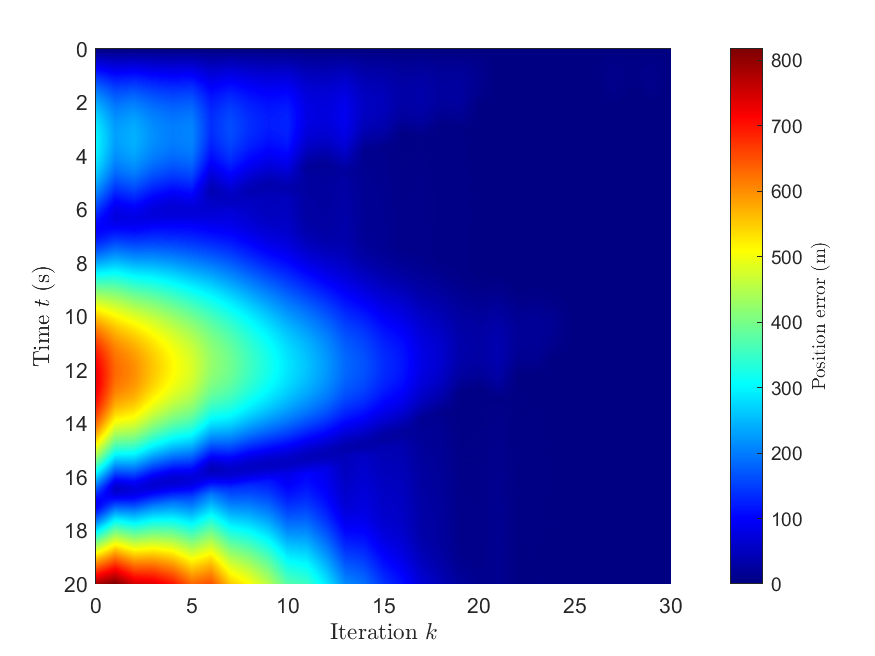}
        \caption{Position tracking performance along the iteration axis.}
        \label{fig3}
    \end{minipage}
    \hfill
    \begin{minipage}[b]{0.48\textwidth}
        \centering
        \includegraphics[width=\linewidth]{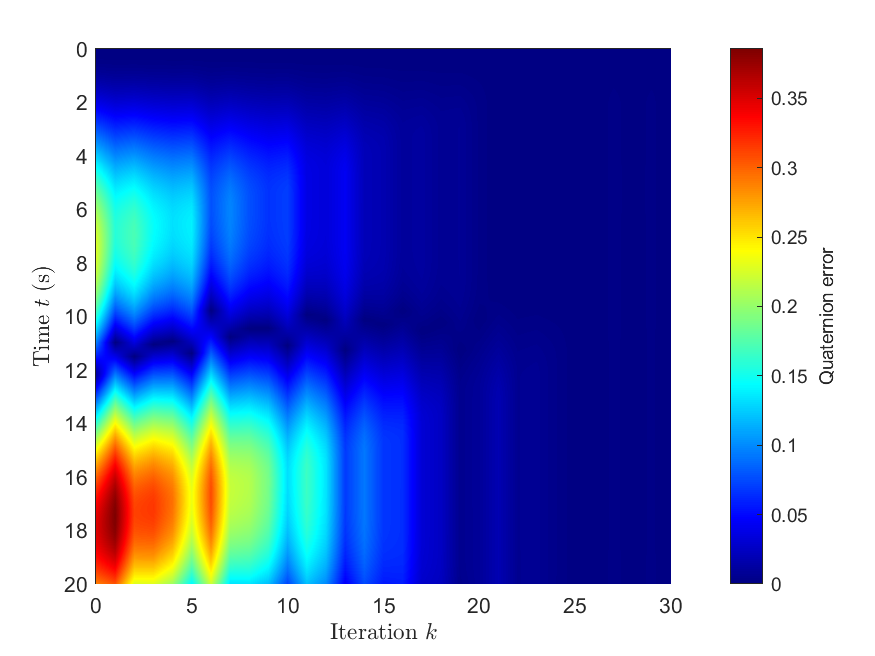}
        \caption{Attitude tracking performance along the iteration axis.}
        \label{fig4}
    \end{minipage}
\end{figure*}

Next, let us investigate the equivalence between ${\mathring{\boldsymbol{Q}}}^*_1\circ{\mathring{\boldsymbol{Q}}}_2=\boldsymbol{1}+\epsilon \boldsymbol{0}$ and $\mathring{\boldsymbol{Q}}_1
=
\mathring{\boldsymbol{Q}}_2$.
Suppose that $\mathring{\boldsymbol{Q}}_1
=
\mathring{\boldsymbol{Q}}_2$ holds, we can directly obtain $\mathring{\boldsymbol{Q}}^*_1 \circ \mathring{\boldsymbol{Q}}_2=\boldsymbol{1}+\epsilon \boldsymbol{0}$ according to \eqref{eq13_f}. Conversely, suppose that $\mathring{\boldsymbol{Q}}^*_1 \circ \mathring{\boldsymbol{Q}}_2=\boldsymbol{1}+\epsilon \boldsymbol{0}$ holds. We aim to show that this statement implies $\mathring{\boldsymbol{Q}}_1
=
\mathring{\boldsymbol{Q}}_2$. To do so, we write ${\mathring{\boldsymbol{Q}}}_i$ for $i=1,2$ in component form as
\begin{align*}
{\mathring{\boldsymbol{Q}}}_i
\triangleq
\left[\begin{array}{c}
\varepsilon_{r_i}+\epsilon \varepsilon_{c_i}
\\
{\boldsymbol{q}}_{r_i}+\epsilon {\boldsymbol{q}}_{c_i}
\end{array}\right]
\end{align*}
\noindent according to \eqref{eq11_f}. Note that ${\mathring{\boldsymbol{Q}}}_i\in\mathbb{DS}^3,\,\forall i=1,2$ leads to
\begin{align}
\label{eq56_f}
\varepsilon^2_{r_i}+\boldsymbol{q}^\top_{r_i}\boldsymbol{q}_{r_i}=1 \text{ and }
\varepsilon_{r_i}\varepsilon_{c_i}+\boldsymbol{q}^\top_{r_i}\boldsymbol{q}_{c_i}=0
\end{align} 
for $i=1,2$. From the fact that ${\mathring{\boldsymbol{Q}}}^*_1\circ{\mathring{\boldsymbol{Q}}}_2=\boldsymbol{1}+\epsilon \boldsymbol{0}$, the real part gives $\varepsilon_{r_1}\varepsilon_{r_2}+\boldsymbol{q}^\top_{r_1}\boldsymbol{q}_{r_2}=1$ holds. Combining this result with the first equation of \eqref{eq56_f}, we arrive at
\begin{align*}
\left(\varepsilon_{r_1}-\varepsilon_{r_2}\right)^2+\|\boldsymbol{q}_{r_1}-\boldsymbol{q}_{r_2}\|_2^2=0,
\end{align*}
which means
\begin{align}
\label{eq57_f}
\varepsilon_{r_1}=\varepsilon_{r_2} \text{ and }
\boldsymbol{q}_{r_1}=\boldsymbol{q}_{r_2}.
\end{align} 
\noindent Additionally, considering that the dual part of ${\mathring{\boldsymbol{Q}}}^*_1\circ{\mathring{\boldsymbol{Q}}}_2$ is equal to $\boldsymbol{0}$, we deduce that $\varepsilon_{r_1}\boldsymbol{q}_{c_2}
+
\varepsilon_{c_1}\boldsymbol{q}_{r_2}
-
\varepsilon_{r_2}\boldsymbol{q}_{c_1}
-
\varepsilon_{c_2}\boldsymbol{q}_{r_1}
-
\boldsymbol{q}^\times_{c_1}\boldsymbol{q}_{r_2}
-
\boldsymbol{q}^\times_{r_1}\boldsymbol{q}_{c_2}
=
\boldsymbol{0}$ holds, which can be rewritten as
\begin{align}
\label{eq58_f}
\varepsilon_{r_1}
\left(\boldsymbol{q}_{c_2}
-
\boldsymbol{q}_{c_1}\right)
+
\left(\varepsilon_{c_1}
-
\varepsilon_{c_2}\right)\boldsymbol{q}_{r_1}
=
\boldsymbol{q}^\times_{r_1}
\left(\boldsymbol{q}_{c_2}
-
\boldsymbol{q}_{c_1}\right)
\end{align}
according to \eqref{eq57_f}. Therefore, $\boldsymbol{q}_{c_1}=
\boldsymbol{q}_{c_2}$ holds because the item on the right-hand side of \eqref{eq58_f} is perpendicular to either of the two items on the left-hand side of it. This fact, together with \eqref{eq56_f}, further shows $\varepsilon_{c_1}
=
\varepsilon_{c_2}$. Thus, ${\mathring{\boldsymbol{Q}}}^*_1\circ{\mathring{\boldsymbol{Q}}}_2=\boldsymbol{1}+\epsilon \boldsymbol{0}$ is equivalent to $\mathring{\boldsymbol{Q}}_1
=
\mathring{\boldsymbol{Q}}_2$.

Expanding upon the same line of argument as presented above, we can establish the equivalence between ${\mathring{\boldsymbol{Q}}}^*_1\circ{\mathring{\boldsymbol{Q}}}_2=-\boldsymbol{1}+\epsilon \boldsymbol{0}$ and $\mathring{\boldsymbol{Q}}_1
=
-\mathring{\boldsymbol{Q}}_2$. Thus, its detailed analysis is omitted. The proof of Lemma \ref{lem1} is complete.
\end{proof}

\section{Proof of Lemma \ref{lem2}}
\label{appendix:lem2}
\begin{proof}
Given a dual quaternion $\mathring{\boldsymbol{Q}}$ defined in Lemma \ref{lem2}. According to \eqref{eq15_f}, it is obtained that
\begin{align*}
\mathring{\boldsymbol{Q}}\circ\mathring{\boldsymbol{Q}}^{*}
\!=\!
\left(\boldsymbol{Q}+\epsilon \frac{1}{2} \boldsymbol{Q} \circ \operatorname{aug}\left(\boldsymbol{P}\right)
\right)
\!\circ\!
\left(
\boldsymbol{Q}+\epsilon \frac{1}{2} \boldsymbol{Q} \circ \operatorname{aug}\left(\boldsymbol{P}\right)
\right)^{*}\!\!.
\end{align*}
Considering \eqref{eq7_f}, \eqref{eq12_f}, and the definition of $\operatorname{aug}\left(\cdot\right)$ at the end of Section \ref{sec_Introduction}, we have
\begin{align*}
\left(
\boldsymbol{Q}+\epsilon \frac{1}{2} \boldsymbol{Q} \circ \operatorname{aug}\left(\boldsymbol{P}\right)
\right)^{*}
=
\boldsymbol{Q}^{*}
-
\epsilon \frac{1}{2}  \operatorname{aug}\left(\boldsymbol{P}\right)
\circ
\boldsymbol{Q}^{*},
\end{align*}
\noindent which, together with distributivity of the quaternion multiplication and $\epsilon^2=0$, leads to
\begin{align*}
\mathring{\boldsymbol{Q}}\circ\mathring{\boldsymbol{Q}}^{*}
=
\boldsymbol{Q}
\circ
\boldsymbol{Q}^{*}.
\end{align*}
Therefore, $\mathring{\boldsymbol{Q}}\circ\mathring{\boldsymbol{Q}}^{*}
=
\boldsymbol{1}+\epsilon \boldsymbol{0}$ holds due to $\mathring{\boldsymbol{Q}}\in\mathbb{S}^3$, which implies that $\mathring{\boldsymbol{Q}}$ defined in \eqref{eq15_f} is a unit dual quaternion.

Obviously, $\boldsymbol{Q}$ can be obtained as the real part. Thus, $\boldsymbol{Q}^*$ can be known easily. In addition, we have
\begin{align*}
2\boldsymbol{Q}^*\circ\left(\frac{1}{2} \boldsymbol{Q} \circ \operatorname{aug}\left(\boldsymbol{P}\right)\right)=\operatorname{aug}\left(\boldsymbol{P}\right)
\end{align*} 
because of associativity of the multiplication of quaternions. As a result, $\boldsymbol{Q}$ and $\boldsymbol{P}$ can be uniquely determined by $\mathring{\boldsymbol{Q}}$. The proof of Lemma \ref{lem2} is complete.
\end{proof}

\section{Proof of Lemma \ref{lem4}}
\label{appendix:lem4}
\begin{proof}
It is known that ${\mathring{\boldsymbol{Q}}}$ has the same form as \eqref{eq11_f}. Based on the conjugation definitions of $\mathring{\boldsymbol{Q}}$ in \eqref{eq12_f} and $\operatorname{aug}(\cdot)$, we obtain 
\begin{align}
\label{eq59_f}
\mathring{\boldsymbol{y}}
=
\left[\begin{array}{c}
\mathring{\boldsymbol{q}}^\top\mathring{\boldsymbol{x}}\mathring{\varepsilon}-\mathring{\boldsymbol{q}}^\top\mathring{\boldsymbol{x}}\mathring{\varepsilon}+\mathring{\boldsymbol{q}}^\top\mathring{\boldsymbol{q}}^\times\mathring{\boldsymbol{x}}
\\
\mathring{\boldsymbol{x}}\mathring{\varepsilon}\mathring{\varepsilon}
+
\mathring{\boldsymbol{q}}\mathring{\boldsymbol{q}}^\top\mathring{\boldsymbol{x}}
-2\mathring{\boldsymbol{q}}^\times\mathring{\boldsymbol{x}}\mathring{\varepsilon}
+\mathring{\boldsymbol{q}}^\times\mathring{\boldsymbol{q}}^\times\mathring{\boldsymbol{x}}
\end{array}\right].
\end{align}
\noindent Noting that $\mathring{\boldsymbol{q}}^\top\mathring{\boldsymbol{q}}^\times=\boldsymbol{0}+\epsilon \boldsymbol{0}$ and $\mathring{\boldsymbol{q}}^\times\mathring{\boldsymbol{q}}^\times\mathring{\boldsymbol{x}}
=
\mathring{\boldsymbol{q}}\mathring{\boldsymbol{q}}^\top\mathring{\boldsymbol{x}}-\mathring{\boldsymbol{x}}\mathring{\boldsymbol{q}}^\top\mathring{\boldsymbol{q}}$, we rewrite \eqref{eq59_f} as
\begin{align*}
\mathring{\boldsymbol{y}}
=
\left[\begin{array}{c}
0+\epsilon 0
\\
\mathring{\boldsymbol{x}}\left(\mathring{\varepsilon}\mathring{\varepsilon}-\mathring{\boldsymbol{q}}^\top\mathring{\boldsymbol{q}}\right)
+
2\mathring{\boldsymbol{q}}\mathring{\boldsymbol{q}}^\top\mathring{\boldsymbol{x}}
-2\mathring{\boldsymbol{q}}^\times\mathring{\boldsymbol{x}}\mathring{\varepsilon}
\end{array}\right],
\end{align*}
\noindent which shows
\begin{align*}
\boldsymbol{y}_r
=
\left[\begin{array}{c}
0
\\
\left(\varepsilon_r^2-\boldsymbol{q}_r^\top\boldsymbol{q}_r\right)\boldsymbol{x}_{r}+2\boldsymbol{q}_r\boldsymbol{q}^\top_r\boldsymbol{x}_{r}-2\varepsilon_r\boldsymbol{q}^\times_r\boldsymbol{x}_{r}
\end{array}\right].
\end{align*}
\noindent It is not difficult to verify that $\left(\varepsilon_r^2-\boldsymbol{q}_r^\top\boldsymbol{q}_r\right)\boldsymbol{I}_{3}+2\boldsymbol{q}_r\boldsymbol{q}^\top_r-2\varepsilon_r\boldsymbol{q}^\times_r$ is orthogonal. This fact, together with $\varepsilon_r^2+\boldsymbol{q}_r^\top\boldsymbol{q}_r=1$ from the definition of $\mathbb{DS}^3$, leads to $\|\boldsymbol{x}_r\|_2=\|\boldsymbol{y}_r\|_2$ naturally. The proof of Lemma \ref{lem4} is complete.
\end{proof}

\section{Derivation of \eqref{eq27_f}}
\label{appendix:eq25}

Based on \eqref{eq26_f}, we have
\begin{align*}
\delta\dot{\mathring{\boldsymbol{\omega}}}_k(t)
\triangleq{}& \dot{\mathring{\boldsymbol{\omega}}}_k(t)
-
\operatorname{tr}
\left(\dot{\mathring{\boldsymbol{\omega}}}_{d}(t)
\right)
\nonumber
\\
&
-
\operatorname{red}
\left(\delta\dot{\mathring{\boldsymbol{Q}}}^{*}_k(t)
\circ
\operatorname{aug}\left(\mathring{\boldsymbol{\omega}}_{d}(t)\right)
\circ
\delta\mathring{\boldsymbol{Q}}_k(t)\right)
\nonumber
\\
&
-
\operatorname{red}
\left(\delta\mathring{\boldsymbol{Q}}^{*}_k(t)
\circ
\operatorname{aug}\left(\mathring{\boldsymbol{\omega}}_{d}(t)\right)
\circ
\delta\dot{\mathring{\boldsymbol{Q}}}_k(t)\right).
\end{align*}
Additionally, it is obtained that
\begin{align*}
\delta\dot{\mathring{\boldsymbol{Q}}}^{*}_k(t)
=
-\frac{1}{2}
\operatorname{aug}\left(
\delta\mathring{\boldsymbol{\omega}}_k(t)\right)
\circ
\delta\mathring{\boldsymbol{Q}}^{*}_k(t),
\end{align*}
based on \eqref{eq14_f}, \eqref{eq25_f}, and the definition of $\operatorname{aug}(\cdot)$. Considering \eqref{eq24_f} and \eqref{eq26_f} again, we have
\begin{align*}
\delta\dot{\mathring{\boldsymbol{\omega}}}_k(t)
\triangleq{}& \dot{\mathring{\boldsymbol{\omega}}}_k(t)
+
\frac{1}{2}\operatorname{red}
\left(
\operatorname{aug}\left(
\delta\mathring{\boldsymbol{\omega}}_k(t)\right)
\circ
\operatorname{aug}\left(
\operatorname{tr}\left(\mathring{\boldsymbol{\omega}}_{d}(t)\right)
\right)
\right)
\nonumber
\\
&
-
\operatorname{tr}
\left(\dot{\mathring{\boldsymbol{\omega}}}_{d}(t)
\right)
-
\frac{1}{2}\operatorname{red}
\big(
\operatorname{aug}\left(
\operatorname{tr}\left(\mathring{\boldsymbol{\omega}}_{d}(t)\right)
\right)
\nonumber
\\
&\quad\quad\quad\quad\quad\quad\quad\quad\quad~~
\circ
\operatorname{aug}\left(
\delta\mathring{\boldsymbol{\omega}}_k(t)\right)
\big),
\end{align*}
\noindent where Lemma \ref{lem4} is applied. Furthermore, according to \eqref{eq13_f} and the definitions of $\operatorname{red}(\cdot)$ and $(\cdot)^\times$, we have
\begin{align*}
\delta\dot{\mathring{\boldsymbol{\omega}}}_k(t)
\triangleq{}& \dot{\mathring{\boldsymbol{\omega}}}_k(t)
+
\delta\mathring{\boldsymbol{\omega}}^\times_k(t)\operatorname{tr}\left(\mathring{\boldsymbol{\omega}}_{d}(t)\right)
-
\operatorname{tr}
\left(\dot{\mathring{\boldsymbol{\omega}}}_{d}(t)
\right)
.
\end{align*}
Observing the second equation of \eqref{eq26_f} again, we obtain
\begin{align*}
\delta\dot{\mathring{\boldsymbol{\omega}}}_k(t)
\triangleq{}& \dot{\mathring{\boldsymbol{\omega}}}_k(t)
-
\mathring{\boldsymbol{\omega}}^\times_{k}(t)
\delta\mathring{\boldsymbol{\omega}}_k(t)
-
\operatorname{tr}
\left(\dot{\mathring{\boldsymbol{\omega}}}_{d}(t)
\right),
\end{align*}
\noindent which indicates \eqref{eq27_f}. The derivation of \eqref{eq27_f} is complete.

\section{Proof of Lemma \ref{lem5}}
\label{appendix:lem5}
\begin{proof}
Suppose that $\mathring{\boldsymbol{x}}\triangleq \boldsymbol{x}_r+\epsilon \boldsymbol{x}_c$ and $\mathring{\boldsymbol{y}}\triangleq \boldsymbol{y}_r+\epsilon \boldsymbol{y}_c$, where $\boldsymbol{x}_r,\,\boldsymbol{x}_c,\,\boldsymbol{y}_r,\,\boldsymbol{y}_c\in\R^3$. Based on \eqref{eq9_f} and \eqref{eq22_f}, we have
\begin{align*}
\operatorname{crs}\left(\mathring{\boldsymbol{x}},-\mathring{\boldsymbol{M}}\left(\mathring{\boldsymbol{y}}^\times\mathring{\boldsymbol{x}}\right)\right)
=
-\boldsymbol{x}_r^\top\boldsymbol{J}\boldsymbol{y}_r^\times\boldsymbol{x}_r
-
m\boldsymbol{x}_c^\top\boldsymbol{y}_c^\times\boldsymbol{x}_r
\end{align*}
\noindent and
\begin{align*}
\operatorname{crs}\left(\mathring{\boldsymbol{x}},\mathring{\boldsymbol{y}}^\times\mathring{\boldsymbol{M}}\mathring{\boldsymbol{x}}\right)
&=
\boldsymbol{x}_r^\top\boldsymbol{y}_r^\times\boldsymbol{J}\boldsymbol{x}_r
+
m\boldsymbol{x}_r^\top\boldsymbol{y}_c^\times\boldsymbol{x}_c
\nonumber
\\
&=
-\boldsymbol{x}_r^\top\boldsymbol{J}\boldsymbol{y}_r^\times\boldsymbol{x}_r
-
m\boldsymbol{x}_c^\top\boldsymbol{y}_c^\times\boldsymbol{x}_r,
\end{align*}
\noindent where $\boldsymbol{x}_c^\top\boldsymbol{y}_r^\times\boldsymbol{x}_c=0$ and the symmetry of $\boldsymbol{J}$ are applied. Based on the above analysis, \eqref{eq29_f} can be obtained naturally. The proof of Lemma \ref{lem5} is complete.
\end{proof}

\section{Proof of Lemma \ref{lem6}}
\label{appendix:lem6}
\begin{proof}
According to \eqref{eq13_f}, it is obtained that $$\mathring{\boldsymbol{Q}}_1\circ \mathring{\boldsymbol{Q}}_2
=f \left(\mathring{\boldsymbol{Q}}_1\right) \mathring{\boldsymbol{Q}}_2,
$$ 
\noindent where
\begin{align*}
f (\mathring{\boldsymbol{Q}}_1)
\triangleq
\left[\begin{array}{cc}
\mathring{\varepsilon}_1 & -\mathring{\boldsymbol{q}}_1^\top
\\
\mathring{\boldsymbol{q}}_1 & \mathring{\varepsilon}_1 \boldsymbol{I}_{3} + \mathring{\boldsymbol{q}}_1^\times
\end{array}\right].
\end{align*}
\noindent It is not hard to verify
\begin{align*}
f^\top (\mathring{\boldsymbol{Q}}_1)=f (\mathring{\boldsymbol{Q}}^*_1).
\end{align*}
Based on these facts, we have
\begin{align*}
\left(\mathring{\boldsymbol{Q}}_1\circ \mathring{\boldsymbol{Q}}_2\right)^\top \mathring{\boldsymbol{Q}}_3
=
\left(
f \left(\mathring{\boldsymbol{Q}}_1\right) \mathring{\boldsymbol{Q}}_2
\right)^\top \mathring{\boldsymbol{Q}}_3
=
\mathring{\boldsymbol{Q}}_2^\top
f^\top \left(\mathring{\boldsymbol{Q}}_1\right)
\mathring{\boldsymbol{Q}}_3
,
\end{align*}
\noindent and thus,
\begin{align*}
\left(\mathring{\boldsymbol{Q}}_1\circ \mathring{\boldsymbol{Q}}_2\right)^\top \mathring{\boldsymbol{Q}}_3
=
\mathring{\boldsymbol{Q}}_2^\top
f(\mathring{\boldsymbol{Q}}^*_1)
\mathring{\boldsymbol{Q}}_3
=
\mathring{\boldsymbol{Q}}_2^\top
\left(\mathring{\boldsymbol{Q}}_1^*\circ \mathring{\boldsymbol{Q}}_3\right),
\end{align*}
\noindent which implies that the imaginary part of $\left(\mathring{\boldsymbol{Q}}_1\circ \mathring{\boldsymbol{Q}}_2\right)^\top \mathring{\boldsymbol{Q}}_3$ is equal to that of $
\mathring{\boldsymbol{Q}}_2^\top \left(\mathring{\boldsymbol{Q}}_1^*\circ \mathring{\boldsymbol{Q}}_3\right)$. This fact is consistent with the description in \eqref{eq30_f} according to the definition of $\operatorname{crs}\left(\cdot\right)$ in \eqref{eq9_f}. The proof of Lemma \ref{lem6} is complete.
\end{proof}

\section*{Acknowledgment}

The authors would like to thank Dr. Jingyao Zhang from Beihang University and Dr. Yu Hui from Qingdao University of Science and Technology for their helpful discussions.

\vfill

\end{document}